\documentclass[conference]{IEEEtran}
\usepackage{dctl}
\usepackage{comment}
\usepackage{balance}
\usepackage{dblfloatfix}

\begin{document}
\title{A Team Based Variant of CTL}

\author{
\IEEEauthorblockN{Andreas Krebs}
\IEEEauthorblockA{Wilhelm-Schickard-Institut für Informatik\\
Eberhard-Karls-Universität Tübingen\\
Sand 13, 72076 Tübingen, Germany\\
Email: \url|mail@krebs-net.de|}
%
\and
\IEEEauthorblockN{Arne Meier}
\IEEEauthorblockA{Institut für Theoretische Informatik\\
Leibniz Universität Hannover\\
Appelstr.~4, 30176 Hannover, Germany\\
\url|meier@thi.uni-hannover.de|}
\and
\IEEEauthorblockN{Jonni Virtema}
\IEEEauthorblockA{Institut für Theoretische Informatik\\
Leibniz Universität Hannover\\
\vspace{1mm}
Appelstr.~4, 30176 Hannover, Germany\\
School of Information Sciences\\
University of Tampere\\
Kanslerinrinne 1, 33014 Tampere, Finland\\
Email: \url|jonni.virtema@gmail.com|}
}

\maketitle

\begin{abstract}
 We introduce two variants of computation tree logic CTL based on team semantics: an asynchronous one and a synchronous one. For both variants we investigate the computational complexity of the satisfiability as well as the model checking problem. The satisfiability problem is shown to be EXPTIME-complete. Here it does not matter which of the two semantics are considered. For model checking we prove a PSPACE-completeness for the synchronous case, and show P-completeness for the asynchronous case. Furthermore we prove several interesting fundamental properties of both semantics.
\end{abstract}

\section{Introduction}

Temporal logic can be traced back to the late 1950s when Prior considered more formally the interplay of time and modality \cite{pr57}. Today it is a well-known and important logic in the area of computer science that has influenced the area of program verification significantly. Since the introduction of temporal logic a wide research field around temporal logic has emerged. The most seminal contributions in this field have been made by Kripke \cite{kri63}, Pnueli \cite{pn77}, Emerson, Clarke, and Halpern \cite{emha85,clem81} to name a few.

In real life applications, especially in the field of program verification, computational complexity is of the greatest significance. In the framework of logic, the most important related decision problems are the satisfiability problem and the model checking problem.
From a software engineering point of view the satisfiability problem can be seen as the question of specification consistency: The specification of a program is expressed via a formula of some logic (e.g., computation tree logic $\CTL$). One then asks whether there exists a model that satisfies the given formula. For model checking an implementation of a system is depicted via a Kripke structure and a specification via a formula of some logic. One then wants to know whether the structure satisfies the formula (i.e., whether the system satisfies the specification). The satisfiability problem for $\CTL$ is known to be $\EXPTIME$-complete by \citeauthor{fila79}, and   \citeauthor{pr80} \cite{fila79,pr80} whereas the model checking problem has been shown to be $\P$-complete by \citeauthor{clemsi86}, and \citeauthor{sc02} \cite{clemsi86,sc02}.

%
%
    Team semantics was introduced to the framework of first-order logic by Hodges \cite{Hodges97c} in the late 1990s. Subsequently V\"a\"an\"anen adopted the notion of a team as a core notion, first, for his (first-order) \emph{dependence logic} \cite{va07} and later, in the framework of modal logic, for \emph{modal dependence logic} \cite{va08}. The fundamental idea behind team semantics is crisp. The idea is to shift from \emph{singletons} to \emph{sets} as satisfying elements of formulas. These sets of satisfying elements are called \emph{teams}. In the \emph{team semantics} of first-order logic formulas are evaluated with respect to first-order structures and sets of assignments. In the \emph{team semantics} of modal logic formulas are evaluated with respect to Kripke structures and sets of worlds.
    
    Various logics with team semantics have been defined and investigated. Most of these logics are extensions of first-order, propositional, or modal logics with novel atomic propositions that describe properties of teams (e.g, inclusion, dependence, and independence). Modal dependence logic ($\MDL$) extends modal logic with \emph{propositional dependence atoms}. A dependence atom, denoted by $\dep{p_1,\dots,p_n,q}$, intuitively states that (inside a team) the truth value of the proposition $q$ is functionally determined by the truth values of the propositions $p_1,\dots,p_{n}$. It was soon realized that $\MDL$ lacks the ability to express temporal dependencies; there is no mechanism in $\MDL$ to express dependencies that occur between different points of the model. This is due to the restriction that only proposition symbols are allowed in the dependence atoms of modal dependence logic. To overcome this defect Ebbing et~al. \cite{ehmmvv13} introduced the
\emph{extended modal dependence logic} ($\EMDL$) by extending the scope of dependence atoms to arbitrary modal formulas. Dependence atoms of $\EMDL$ are of the form $\dep{\varphi_1,\dots\varphi_n,\psi}$, where $\varphi_1,\dots,\varphi_n,\psi$ are formulas of modal logic.

In recent years the research around first-order and modal team semantics has been vibrant. See, e.g., \cite{ehmmvv13,helusavi14,kmsv14} for related research in the modal context.
While team semantics has been considered in the context of regular modal logic, to the best knowledge of the authors, this is the first article to consider team semantics for a more serious temporal logic. The only logic from this area which can express some temporal like properties is $\EMDL$.

In this article we propose two team based variants of $\CTL$: an asynchronous one and a synchronous one. We abandon the idea of defining semantics for $\CTL$ via pointed Kripke structures. Instead the semantics are defined via pairs $(K,T)$, where $K$ is an ordinary Kripke structure and $T$, called \emph{a team of $K$}, is a subset of the domain of $K$. We will then investigate these two natural variants of $\CTL$ lifted to team semantics.

In the synchronous model we stipulate that the evolution of time is synchronous among all team members whereas in the asynchronous case we do not have this assumption.
The main difference of these two approaches can be seen in the definitions of the semantics for the modal operator \emph{until} (see \Cref{semantics}):
Either the time is synchronous among all team members, and hence when we quantify over a time point in the future all team members will advance the same number of steps in the Kripke structure, or we consider an asynchronous model, where when we quantify over a future point each team member might advance a different number of steps. We then investigate the expressive powers and computational complexity of these formalisms.



It remains to be seen whether the team-based semantics can be used to model computational
phenomena arising in the context of parallel or distributed processes.
Our logic should be viewed as a first adaptation of $\CTL$ in the context of team semantics. The next natural step is, of course, to add different dependency notions such as dependence and independence to the language. Describing dependency properties of computations is of great interest directly motivated from the area of dependence logic.  

\emph{Related work.}
There exists an approach of multi-modal $\CTL$, and one called alternating-time temporal logic $\mathcal{AT\!L}$. The first is a variant of $\CTL$ with several agents acting asynchronously. The latter is an extension of $\CTL$ that is used to reason about several agents acting synchronously (general concurrent game structures) or asynchronously (turn-based structures). For the first see, e.g., the work of \citeauthor{ahrsw09} \cite{ahrsw09}. The second contribution is due to the work of \citeauthor{aho02} \cite{aho02}.

Moreover a classification of the computational complexity of fragments of the satisfiability as well as the model checking problem of $\CTL$ by means of allowed Boolean operators and/or combinations of allowed temporal operators has been obtained recently \cite{mmtv09,bmmstv11}.
A survey on Kripke semantics with connections to several areas of logic, e.g., temporal, dependence, and hybrid logic can be found in a work of \citeauthor{mmmv12} \cite{mmmv12}.
An automatic-theoretic approach to branching-time model checking has been investigated by \citeauthor{kvw00} \cite{kvw00}. For a temporal logic with team-style semantics see the work of \citeauthor{aj07} \cite{aj07}.

In the literature, a multitude of approaches for modeling different kind of computation (e.g., serial, parallel, and distributed) have been considered. Also many natural connections to logic have been discovered.
Some of these approaches deal directly with computational devices as in circuit complexity (for details see,  e.g., \cite{vol99}). 
Another approach of this kind is the introduction of a parallel random access machine (PRAM) by \citeauthor{im89} \cite{im89}. Logical characterisations of complexity classes are investigated in the field of descriptive complexity theory. A multitude of natural characterisations are known (see, e.g., the book of \citeauthor{immer} \cite{immer} for further details).
A connection between particular modal logics and distributed computing has been considered recently by \citeauthor{disc15} \cite{disc15}. They give a characterisation of constant time parallel computation in the spirit of descriptive complexity.

\emph{Results.} We introduce two new variants of $\CTL$ based on team semantics: an asynchronous one and a synchronous one. We investigate the computational complexity of the satisfiability and the model checking problem of these variants. For model checking the complexity differs with respect to these variants. In the asynchronous case we show that the complexity is $\P$-complete and hence the same as for $\CTL$ by exploiting structural properties of the satisfaction relation. For synchronous semantics surprisingly the complexity becomes $\PSPACE$-complete. Hence having synchronised semantics makes the model checking in this logic intractable under reasonable complexity separation assumptions. For the satisfiability problem we show that the complexity stays $\EXPTIME$-complete (same as for $\CTL$) independently on which semantics is used.

\emph{Structure of the paper.} In \Cref{sec:prelims} we give syntax and semantics of two novel variants of computation tree logic $\CTL$. 
In \Cref{sec:properties} we prove closure properties of the satisfaction relations of the two variants. \Cref{sec:expressive} deals with their expressive power. In \Cref{sec:complexity} we completely classify the computational complexity of the satisfiability and the model checking problem with respect to both variants. Finally we present interesting further research directions and conclude.

\section{Preliminaries}\label{sec:prelims}
We start this section with a brief summary of the relevant complexity classes for this paper. We then define the syntax and semantics of computation tree logic $\CTL$. We deviate from the existing literature by using a convention that is customary related to logics with team semantics: We define the syntax of $\CTL$ in negation normal form, i.e., we require that negations may appear only in front of proposition symbols. We then introduce two variants of $\CTL$ that are designed to model parallel computation. \bigskip

\subsection{Complexity}
The underlying computation model is Turing machines. We will make use of the complexity classes $\P$, $\PSPACE$, and $\EXPTIME$. All reductions in this paper are \emph{logspace many-to-one reductions}, i.e., computable by a deterministic Turing machine running in logarithmic space. For a deeper introduction into this topic we refer the reader to the good book of \citeauthor{pip97} \cite{pip97}.

\subsection{Temporal Logic} 

Let $\PROP$ be a finite set of proposition symbols. The set of all $\CTL$-formulas is defined inductively via the following grammar:
$$
\varphi::= p\mid \lnot p\mid (\varphi\land\varphi)\mid(\varphi\lor\varphi)\mid\logicOpFont{P}\X\varphi\mid \logicOpFont{P}[\varphi\U\varphi]\mid \logicOpFont{P}[\varphi\W\varphi],
$$
where $\logicOpFont{P}\in\{\A,\E\}$ and $p\in\PROP$. We define the following usual shorthands: $\top:= p\vee \neg p$, $\bot:= p\wedge \neg p$, $\F \varphi := [\top\U\varphi]$, and $\G \varphi := [\varphi\W\bot]$. Note that the formulas are in negation normal form (NNF). This is not a severe restriction as transforming a given formula into its NNF requires linear time in the input length.

A Kripke structure $K$ is a tuple $(W,R,\eta)$ where $W$ is a finite, non-empty set of states, $R\colon W\times W$ is a total transition relation (i.e., for every $w\in W$ there is a $w'\in W$ such that $wRw'$), and $\eta\colon W\to2^{\PROP}$ is a labelling function. A path $\pi=\pi(1),\pi(2),\dots$ is an infinite sequence of states $\pi(i)\in W$ such that $\pi(i)R\pi(i+1)$ holds. By $\Pi(w)$ we denote the (possibly infinite) set of all paths $\pi$ for which $\pi(1)=w$.

\begin{definition}[Semantics of $\CTL$]
Let $K=(W,R,\eta)$ be a Kripke structure and $w\in W$ a state. The satisfaction relation $\models$ for $\CTL$ is defined as follows:
$$
\begin{array}{l@{}l}
 K,w\models p &\text{ iff } p\in\eta(w),\\
 K,w\models \lnot p &\text{ iff } p\notin\eta(w),\\
 K,w\models \varphi\land\psi &\text{ iff } K,w\models\varphi \text{ and }K,w\models\psi,\\
 K,w\models \varphi\lor\psi &\text{ iff } K,w\models\varphi \text{ or }K,w\models\psi,\\
 K,w\models\logicOpFont{P}\X\varphi &\text{ iff } \Game\pi\in\Pi(w):K,\pi(2)\models\psi,\\
 K,w\models\logicOpFont{P}[\varphi\U\psi] &\text{ iff } \Game\pi\in\Pi(w)\exists k\in\N:K,\pi(k)\models\psi\text{ and }\\
 &\qquad \forall 1\leq i<k: K,\pi(i)\models\varphi, \text{ and}\\
 K,w\models\logicOpFont{P}[\varphi\W\psi] &\text{ iff } \Game\pi\in\Pi(w)\forall i: K,\pi(i)\models\varphi\text{ or }\\
 &\qquad (\exists k\in\N:K,\pi(k)\models\psi\text{ and }\\
 &\qquad\;\forall 1\leq i<k: K,\pi(i)\models\varphi),
\end{array}$$
where $\logicOpFont{P}\in\{\A,\E\}$ and $\Game=\exists$ if $\logicOpFont{P}=\E$ and $\Game=\forall$ if $\logicOpFont{P}=\A$.
\end{definition}

Next we will introduce team semantics for $\CTL$ based on multisets. A \emph{multiset} is a generalisation of the concept of a set that allows multiple instances of the same element in the multiset. We denote a multiset that has elements $p$, $q$, $r$, and $r$ by $\mset{p,q,r,r}$. When $W$ is a set (or a multiset), we use $T\multisubseteq W$ to denote that $T$ is a multiset such that each element of $T$ is also an element of $W$. If $T,T'$ are multisets then $T\sqcup T'$ denotes the multiset defined by the disjoint union of the two multisets $T,T'$.

\begin{definition}[Team]
Let $K=(W,R,\eta)$ be a Kripke structure. Any multiset $T$ such that $T\multisubseteq W$ is called a \emph{team of $K$}.
\end{definition}

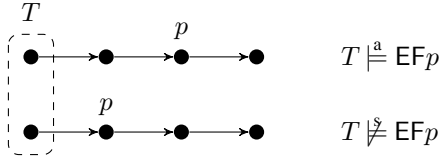
\begin{figure}
 \centering
\begin{tikzpicture}[c/.style={circle,fill=black,inner sep=0mm,minimum width=2mm},x=1cm,y=1cm]

\foreach \x/\l in {1/,2/,3/$p$,4/}
 \node[c,label={90:\l}] (a\x) at (\x,1) {};

\foreach \x/\l in {1/,2/$p$,3/,4/}
 \node[c,label={90:\l}] (b\x) at (\x,0) {};

\foreach \f/\t in {1/2,2/3,3/4}{
 \path[-stealth',black] (a\f) edge (a\t);
 \path[-stealth',black] (b\f) edge (b\t);
}

\draw[black,dashed,rounded corners] (.7,1.3) rectangle (1.3,-.3);
\node at (1,1.6) {$T$};
\node[anchor=west] at (5,1) {$T\amodels \EF p$};
\node[anchor=west] at (5,0) {$T\nsmodels\EF  p$};
\end{tikzpicture}
\caption{Difference between asynchronous and synchronous semantics shown with respect to the formula $\EF p$.}\label{fig:async-vs-synch}
\end{figure}

Next we define two semantics for $\CTL$ based on team semantics: an asynchronous one and a synchronous one. The difference can be seen in the clauses for until and weak until and is also depicted in \Cref{fig:async-vs-synch}.

\begin{definition}[Synchronous and asynchronous team semantics]\label{semantics}
Let $K=(W,R,\eta)$ be a Kripke structure, $T=\mset{t_1,\dots,t_n}$ be a team of $K$, and $\varphi$ and $\psi$ be $\CTL$-formulas. The synchronous satisfaction relation $\smodels$ and the asynchronous satisfaction relation $\amodels$ for $\CTL$ are defined as follows.
The following clauses are common to both semantics. In the clauses $\vdash$ denotes either $\smodels$ or $\amodels$.
$$
\begin{array}{lcl}
 K,T\vdash p & \text{ iff } & \forall w\in T: p\in\eta(w).\\
 K,T\vdash \lnot p & \text{ iff }& \forall w\in T: p\notin\eta(w).\\
 K,T\vdash \varphi\land\psi & \text{ iff } & K,T\vdash\varphi \text{ and }K,T\vdash\psi.\\
 K,T\vdash \varphi\lor\psi & \text{ iff } & \exists T_1\sqcup T_2=T\text{ such that } \\
 && K,T_1\vdash\varphi \text{ and }K,T_2\vdash\psi.\\
 K,T\vdash\E\X\varphi & \text{ iff } & \exists\pi_{t_1}\in\Pi(t_1),\dots,\exists\pi_{t_n}\in\Pi(t_n)\\
 && K,\multibigcup_{1\leq j\leq n}\mset{\pi_{t_j}(2)}\vdash\psi.\\
 K,T\vdash\A\X\varphi & \text{ iff } & \forall\pi_{t_1}\in\Pi(t_1),\dots,\forall\pi_{t_n}\in\Pi(t_n)\\
 && K,\multibigcup_{1\leq j\leq n}\mset{\pi_{t_j}(2)}\vdash\psi.
 \end{array}
$$ 

For the synchronous semantics we have the following clauses, where $\logicOpFont{P}\in\{\A,\E\}$, and  $\Game=\forall$ if $\logicOpFont{P}=\A$, resp., $\Game=\exists$ if $\logicOpFont{P}=\E$.
 \begin{align*}
 &K,T\smodels\logicOpFont{P}[\varphi\U\psi] \quad\text{ iff }\\
 &\quad \Game\pi_{t_1}\in\Pi(t_1),\dots,\Game\pi_{t_n}\in\Pi(t_n):\\
 &\qquad \exists k\in\N:K,\multibigcup_{1\leq j\leq n}\mset{\pi_{t_j}(k)}\smodels\psi\text{ and }\\
 &\qquad \forall 1\leq i<k: K,\multibigcup_{1\leq j\leq n}\mset{\pi_{t_j}(i)}\smodels\varphi.\\
  &K,T\smodels\logicOpFont{P}[\varphi\W\psi] \quad\text{ iff }\\
  &\quad \Game\pi_{t_1}\in\Pi(t_1),\dots,\Game\pi_{t_n}\in\Pi(t_n):\\
 &\qquad \forall i: K,\multibigcup_{1\leq j\leq n}\mset{\pi_{t_j}(i)}\smodels\varphi\text{ or }\\
 &\qquad (\exists k\in\N:K,\multibigcup_{1\leq j\leq n}\mset{\pi_{t_j}(k)}\smodels\psi\text{ and }\\
 &\qquad \forall 1\leq i<k: K,\multibigcup_{1\leq j\leq n}\mset{\pi_{t_j}(i)}\smodels\varphi).
 \end{align*}
 
For the asynchronous semantics we have the following clauses, where $\logicOpFont{P}\in\{\A,\E\}$, and  $\Game=\forall$ if $\logicOpFont{P}=\A$, resp., $\Game=\exists$ if $\logicOpFont{P}=\E$.
\begin{align*}
  &K,T\amodels\logicOpFont{P}[\varphi\U\psi] \quad\text{ iff }\\ 
 &\quad \Game\pi_{1}\in\Pi(t_1),\dots,\Game\pi_{n}\in\Pi(t_n),\; \exists k_{1},\dots,\exists k_{n}\in\N:\\
 &\qquad K,\multibigcup_{1\leq j\leq n}\mset{\pi_{j}(k_{j})}\amodels\psi\text{ and }\\
 &\qquad \forall 1\leq i_1<k_1,\dots,\forall1\leq i_n<k_n: K,\!\!\!\!\multibigcup_{1\leq j\leq n}\!\!\!\!\mset{\pi_{j}(i_j)}\amodels\varphi.\\
 &K,T\amodels\logicOpFont{P}[\varphi\W\psi] \quad\text{ iff }\\
 &\quad \Game\pi_{1}\in\Pi(t_1),\dots,\Game\pi_{n}\in\Pi(t_n):\\
 &\qquad \forall i_1,\dots,\forall i_n: K,\multibigcup_{1\leq j\leq n}\{\pi_{j}(i_j)\}\amodels\varphi\text{ or }\\
 &\qquad (\exists k_{1},\dots,\exists k_{n}\in\N: K,\multibigcup_{1\leq j\leq n}\mset{\pi_{j}(k_{j})}\amodels\psi\text{ and }\\
 &\quad\qquad \forall 1\leq i_1\!<\!k_1,\dots,\forall 1\!\leq\! i_n\!<\!k_n\!:\!K,\!\!\!\!\multibigcup_{1\leq j\leq n}\!\!\!\!\mset{\pi_{j}(i_j)}\amodels\varphi).
\end{align*}

\end{definition}

Observe that the Boolean connective $\lor$ removes synchronicity between the team members.

\section{Properties of Asynchronous and Synchronous Semantics}\label{sec:properties}
In the following section we investigate several properties of the asynchronous and synchronous satisfaction relations $\amodels,\smodels$. In particular, we will use them in the end to deduce a corollary for asynchronous semantics which shows the interplay with the usual $\CTL$ satisfaction relation.\bigskip

Observe that $K, T \vdash \bot$ holds if and only if $T=\emptyset$. The proof of the following lemma then is very easy.

\begin{lemma}[Empty team property]
The following holds for every Kripke model $K$ and $\vdash$ in $\{\smodels, \amodels\}$:
\[
K, \emptyset \vdash \varphi \text{ holds for every $\CTL$-formula $\varphi$}.
\]
\end{lemma}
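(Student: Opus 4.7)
The plan is a straightforward structural induction on the $\CTL$-formula $\varphi$, treating both $\vdash = \smodels$ and $\vdash = \amodels$ uniformly since the base cases and the common clauses are identical and the temporal clauses will collapse in the same way when $T = \emptyset$. The guiding intuition is that when $T = \emptyset$ we have $n = 0$, so every quantifier block of the form ``$\Game \pi_{t_1} \in \Pi(t_1), \dots, \Game \pi_{t_n} \in \Pi(t_n)$'' is an empty quantification, and every multiset $\multibigcup_{1 \leq j \leq n} \mset{\pi_{t_j}(i)}$ is itself the empty multiset, which lets the induction hypothesis do the remaining work.

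For the base cases $p$ and $\lnot p$, both conditions ``$\forall w \in T : p \in \eta(w)$'' and ``$\forall w \in T : p \notin \eta(w)$'' hold vacuously over $T = \emptyset$. For $\varphi \land \psi$ the induction hypothesis gives $K, \emptyset \vdash \varphi$ and $K, \emptyset \vdash \psi$ directly. For $\varphi \lor \psi$, the split $\emptyset = \emptyset \sqcup \emptyset$ witnesses the existential clause, again using the induction hypothesis on both disjuncts.

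For the next-operators $\E \X \varphi$ and $\A \X \varphi$, with $n = 0$ the (existential or universal) quantification over paths is vacuous, and the resulting successor multiset is empty; the induction hypothesis then yields $K, \emptyset \vdash \varphi$. The until- and weak-until-operators, in both the synchronous and asynchronous form, follow the same pattern: the outer quantifier block over paths is empty, so one only has to verify the inner condition; but in each case the inner condition either involves empty multisets (so the inductive hypothesis applies) or is itself a vacuous universal statement over indices that never need to be checked against a nonexistent path. In particular, for $\logicOpFont{P}[\varphi \U \psi]$ one picks any $k$ (or any $k_1, \dots, k_n$ in the asynchronous case) and observes that both $\multibigcup_{1 \leq j \leq n} \mset{\pi_{t_j}(k)}$ and each $\multibigcup_{1 \leq j \leq n} \mset{\pi_{t_j}(i)}$ are empty, so the clauses for $\psi$ and $\varphi$ hold by the induction hypothesis; the weak-until case is identical.

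There is no genuine obstacle here: the only thing to be a bit careful about is not to confuse the empty quantification over the non-existent paths (which is vacuously satisfied, regardless of whether $\Game$ is $\exists$ or $\forall$) with the subsequent quantifications over $k$ or the indices $i, i_j$, which are honest quantifiers over $\N$ that are discharged by the induction hypothesis applied to the empty team. Since the argument does not distinguish between $\smodels$ and $\amodels$ at any step, the lemma follows for both satisfaction relations simultaneously.
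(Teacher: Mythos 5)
Your proof is correct and matches what the paper intends: the paper omits the argument entirely (calling it ``very easy''), and the intended justification is exactly your structural induction in which every clause becomes vacuous or reduces to the induction hypothesis applied to the empty team.
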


When restricted to singleton teams, the synchronous and asynchronous team semantics coincide with the traditional semantics of $\CTL$ defined via pointed Kripke models.\begin{lemma}[Singleton equivalence]\label{lem:singleton}
 For every Kripke structure $K=(W,R,\eta)$ and every world $w\in W$ the following equivalence holds:
 $$
 K,\mset{w}\amodels\varphi \overset{(1)}{\Leftrightarrow} K,\mset{w}\smodels\varphi\overset{(2)}{\Leftrightarrow} K,w\models\varphi.
 $$
\end{lemma}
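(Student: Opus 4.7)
The plan is to prove both equivalences simultaneously by structural induction on $\varphi$. Since restricting to a singleton team collapses the multiset unions $\multibigcup_{1 \le j \le n}$ appearing in the clauses to a single element, the asynchronous and synchronous clauses for $\X$, $\U$, and $\W$ become syntactically identical to each other (there is only one path $\pi_{t_1}$, and in the asynchronous case only one index $k_1$ to choose), and moreover structurally match the corresponding clauses in the classical pointed semantics. Thus (1) and (2) reduce to the same inductive proof.

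For the base cases $p$ and $\lnot p$, the clause $\forall w \in T: p \in \eta(w)$ with $T = \mset{w}$ reduces directly to the classical condition. Conjunction is immediate from the induction hypothesis. For disjunction $\varphi \lor \psi$, the only multiset decompositions of $\mset{w}$ are $(\emptyset, \mset{w})$ and $(\mset{w}, \emptyset)$; combined with the empty team property proved just above, this shows that $K, \mset{w} \vdash \varphi \lor \psi$ holds iff $K, \mset{w} \vdash \varphi$ or $K, \mset{w} \vdash \psi$, and the induction hypothesis closes the case.

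For the modal cases, consider for instance $\E[\varphi\U\psi]$. In the asynchronous semantics with $T = \mset{w}$, the condition becomes: there exists a path $\pi \in \Pi(w)$ and some $k \in \N$ with $K, \mset{\pi(k)} \amodels \psi$ and for every $1 \le i < k$, $K, \mset{\pi(i)} \amodels \varphi$. In the synchronous semantics, the clause reads identically, since there is only one path to quantify over, so the universal over $i$ and existential over $k$ are the same in both. Applying the induction hypothesis to the singleton subteams $\mset{\pi(k)}$ and $\mset{\pi(i)}$ yields exactly the classical clause $K, \pi(k) \models \psi$ and $K, \pi(i) \models \varphi$. The cases for $\A\X$, $\E\X$, $\A[\varphi\U\psi]$, and the weak-until variants are handled analogously.

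The only point requiring mild care is the disjunction case, where one must observe that on a singleton (multi)set every split is trivial thanks to the empty team property; everything else is a direct collapse of the team-semantic clauses once $|T|=1$. Consequently there is no real obstacle, and the proof is a routine induction made short by the observation that singleton teams eliminate all the machinery that distinguishes the three semantics.
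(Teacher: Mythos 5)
Your proof is correct and follows essentially the same route as the paper: a structural induction in which singleton teams collapse the multiset unions so that the asynchronous, synchronous, and classical clauses coincide, with the empty team property disposing of the disjunction splits. The only cosmetic difference is that you run both equivalences through one induction, whereas the paper dispatches (1) with a brief observation and then does the induction for (2).
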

\begin{proof}
It is straightforward to check that on singleton teams the synchronous semantics of until and weak until coincide with that of the asynchronized semantics. Since none of the clauses in the two semantics makes the size of teams grow, the equivalence (1) follows.

Now turn to (2). Let $K=(W,R,\eta)$ be an arbitrary Kripke structure. We first prove the claim via induction on structure of $\varphi$:

Assume that $\varphi$ is a (negated) proposition symbol $p$. Now
\begin{align*}
K,w&\models \varphi\\
&\text{iff}\quad \text{$p$ is (not) in $\eta(w')$}\\
&\text{iff}\quad \text{for all $w'\in \mset{w}$ it holds that $p$ is (not) in $\eta(w')$}\\
&\text{iff}\quad K,\mset{w}\smodels\varphi.
\end{align*}

The case $\land$ trivial. For the $\lor$ case, assume that $\varphi = \psi\lor\theta$. Now it holds that
\begin{align*}
K,&w\models \psi\lor\theta\\ 
&\text{iff}\quad K,w\models \psi \text{ or } K,w\models \theta\\
&\text{iff}\quad K,\mset{w}\smodels \psi \text{ or } K,\mset{w}\smodels \theta\\
&\text{iff}\quad (K,\mset{w}\smodels \psi \text{ and } K,\emptyset \smodels \theta) \text{ or } \\
& \hphantom{\text{iff}\quad} (K,\emptyset \smodels \psi \text{ and } K,\mset{w}\smodels \theta)\\
&\text{iff}\quad \exists T_1\sqcup T_2=\mset{w}\text{ s.t.\ }K,T_1\smodels\psi \text{ and }K,T_2\smodels\theta\\
&\text{iff}\quad K,\mset{w}\smodels \psi\lor\theta.
\end{align*}   

Here the first equivalence holds by the semantics of disjunction, the second equivalence follow by the induction hypothesis, the third via the empty set property, the fourth via the empty set property in combination with the semantics of ``or'', and the last by the team semantics of disjunction.

The cases for $\E\X$ and $\A\X$, until and weak until are all similar and straightforward. We show here the case for $\E\X$. Assume $\varphi = \E\X\psi$. Now $K,w\models \E\X\psi$ iff there exists a point $\pi\in\Pi(w)$ such that $K,\pi(2) \models \psi$. Now since trivially $\multibigcup_{1\leq j\leq 1}\mset{\pi_{t_j}(2)} = \mset{\pi_{t_1}(2)}$, and since by the induction hypothesis $K,\pi(2) \models \psi$ iff $K,\mset{\pi(2)} \smodels \psi$, the above is equivalent to $K,\mset{w}\smodels \E\X\psi$.
\end{proof}

Let $\vdash$ denote a team satisfaction relation. We say that $\vdash$ is \emph{downward closed} if the following holds for every Kripke structure $K$, for every $\CTL$-formula $\varphi$, and for every team $T$ and $T'$ of $K$:
\[
\text{If $K,T\vdash \varphi$ and $T'\multisubseteq T$ then $K,T'\vdash \varphi$}. 
\]

The proof of the following lemma is analogous with the corresponding proofs for modal and first-order dependence logic (see \cite{va07,va08}).
\begin{lemma}[Downward closure]\label{lem:dwclos}
 $\amodels$ and $\smodels$ are downward closed.
\end{lemma}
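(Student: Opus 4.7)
The plan is to proceed by structural induction on $\varphi$, showing that for every Kripke structure $K$, every team $T$, and every $T'\multisubseteq T$, if $K,T\vdash\varphi$ then $K,T'\vdash\varphi$. The literal and conjunction cases are immediate: if every world in $T$ has (or lacks) $p$ in its label, the same holds for every world in the sub-multiset $T'$, and a conjunction follows conjunct-wise from the induction hypothesis. For the disjunction $\varphi=\psi\lor\theta$ with witness decomposition $T=T_1\sqcup T_2$ where $K,T_1\vdash\psi$ and $K,T_2\vdash\theta$, I would define $T_1'$ and $T_2'$ by taking, for each multiset element of $T'$, the copy that comes from $T_1$ or from $T_2$, so that $T'=T_1'\sqcup T_2'$ with $T_i'\multisubseteq T_i$. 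The induction hypothesis applied to each sub-multiset then finishes this case.

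The temporal and modal clauses split cleanly along the outer path quantifier. In every \emph{existential} case ($\E\X$, $\E[\varphi\U\psi]$, $\E[\varphi\W\psi]$, in either semantics), a witness for $K,T\vdash\varphi$ consists of a choice of path from each element of $T$ (together with a time point $k$ or a tuple $(k_j)_j$ for until/weak until). To witness $K,T'\vdash\varphi$, I would simply project this data to the indices coming from elements of $T'$. The multisets of worlds of the form $\multibigcup_{j}\mset{\pi_{j}(\cdot)}$ seen at each relevant step are then sub-multisets of the corresponding multisets for $T$, so the induction hypothesis on the inner formula(s) gives the required satisfaction at every step.

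In every \emph{universal} case ($\A\X$, $\A[\varphi\U\psi]$, $\A[\varphi\W\psi]$), the direction of quantification is reversed, so I would go the other way. Given any family of paths from the members of $T'$, I would extend it to a family of paths from all of $T$ by choosing arbitrary paths in $\Pi(t)$ for every element $t$ of $T\setminus T'$. Applying $K,T\vdash\varphi$ to this extended family yields the required time witnesses, and restricting the resulting step-multisets back to the indices from $T'$ produces sub-multisets of those for $T$, to which the induction hypothesis on the inner formulas applies.

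The main technical point that requires care is the asynchronous $\A[\varphi\U\psi]$ clause, because both the witness tuple $(k_j)_j$ and the universally quantified tuple $(i_j)_j$ are indexed by team members. After extending the $T'$-paths to $T$-paths and obtaining $(k_j)_{j\leq n}$, I would verify that any tuple $(i_j)_{j\in J}$ with $J$ the index set of $T'$ and $i_j<k_j$ can be completed to a full tuple $(i_j)_{j\leq n}$ with $i_j<k_j$ for all $j$, simply by picking any admissible value for the missing coordinates. This reduces the $\A\U$ condition for $T'$ to the $\A\U$ condition for $T$ step by step, after which the induction hypothesis (combined with the sub-multiset relation between the respective point-wise multisets) closes the case; the $\A\W$ clause is analogous. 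The remaining bookkeeping is routine and mirrors the standard downward-closure proofs for modal and first-order dependence logic referenced in the lemma.
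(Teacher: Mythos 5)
Your proof is correct and follows essentially the same route as the paper's: structural induction, with literals and $\land$ immediate, the $\lor$ case handled by splitting $T'$ along the witnessing decomposition of $T$, and the path-quantifier cases handled by projecting witness data (existential) or extending a path family over $T'$ to one over $T$ (universal); the paper only writes out the $\mathsf P\X$ case explicitly and declares until and weak until ``analogous,'' so your treatment is in fact more detailed than the original. The one fragile spot is exactly the step you single out: completing a partial tuple $(i_j)_{j\in J}$ to a full tuple needs an admissible value $1\leq i_j<k_j$ for every missing coordinate, and none exists when some $k_j=1$; in that situation the universal block in the asynchronous until clause is vacuously true for $T$ while the corresponding condition for $T'$ need not be, so the projection/completion argument (and, under the most literal reading of the printed semantics, the lemma itself) fails at this edge case. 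This is a defect of the semantics as written rather than of your strategy --- the paper's own proof silently skips the until cases --- but it is worth stating explicitly how coordinates with empty range are to be treated before relying on the completion step.
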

\begin{proof}
We proof the claim for $\smodels$ only. For $\amodels$ the argumentation is similar. The proof is by induction on the structure of $\varphi$.

Let $K=(W,R,\eta)$ be an arbitrary Kripke structure and $T'\subseteq T$ be some teams of $K$.
The cases for literals are trivial: Assume $K,T\vdash p$. Then by definition $p\in\eta(w)$ for every $w\in T$. Now since $T'\subseteq T$, clearly $p\in\eta(w)$ for every $w\in T'$. Thus $K,T'\vdash p$. The case for negated propositions symbols is completely symmetric.
 
The case for $\land$ is clear. For the case for $\varphi\lor\psi$ assume that $K,T\smodels\varphi\lor\psi$. Now by the definition of disjunction there exist $T_1\cup T_2=T$ such that $K,T_1\smodels\varphi$ and $K,T_2\smodels\psi$. By induction hypothesis it the follows that $K,T_1\cap T'\smodels\varphi$ and $K,T_2\cap T'\smodels\psi$. Now since clearly $T'= (T_1\cap T') \cup (T_2\cap T')$, it follows by the semantics of the disjunction that $K,T'\smodels\varphi\lor\psi$.

Now consider $\mathsf P\X\varphi$. Let $T=\mset{t_1,\dots,t_n}$, where $n\in\N$, and assume that $K,T\smodels\mathsf\P\X\varphi$. We have to show that $K,T'\smodels\mathsf P\X\varphi$ for every $T'\subseteq T$. By the semantics of $\mathsf P\X\varphi$ we have that
\begin{equation}\label{eq:1}
\Game\pi_{t_1}\in\Pi(t_1),\dots, \pi_{t_n}\in\Pi(t_n): K,\hspace{-1mm}\multibigcup_{1\leq j\leq n}\hspace{-1mm}\mset{\pi_{t_j}(2)}\smodels\varphi.
\end{equation}

It suffices to show that for every subteam $T'=\mset{s_1,\dots,s_k}$ of $T$, with $1\leq k\leq n$,
\[
\Game\pi_{s_1}\in\Pi(s_1),\dots,\pi_{s_n}\in\Pi(s_k): K,\multibigcup_{1\leq j\leq k}\mset{\pi_{t_j}(2)}\smodels\varphi
\]
holds.
But this follows from \eqref{eq:1} by the induction hypothesis. The cases for $\U$ and $\W$ are analogous.
\end{proof}

In this article, we consider multisets of points as teams. Observe that with respect to the satisfaction relation the use of multisets has no real consequence. However this does not hold for all extensions of these variants (see, e.g., \cite{va14}). The proof of the following corollary is self-evident. The proof uses the fact that both satisfaction relations are downward closed.
\begin{corollary}
Let $\varphi$ be a $\CTL$-formula, $\vdash\in\{\smodels,\amodels\}$, $K$ be a Kripke structure, $T$ be a team of $K$, and $T'$ be the underlying set of the multiset $T$. Then $K,T\vdash \varphi$ iff $K,T'\vdash \varphi$.
\end{corollary}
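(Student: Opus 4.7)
The plan is to prove the equivalence by induction on the structure of $\varphi$, with \Cref{lem:dwclos} immediately handling one direction. The forward implication $K,T\vdash\varphi \Rightarrow K,T'\vdash\varphi$ requires no induction: since each world of $T'$ occurs in $T$, we have $T'\multisubseteq T$, and downward closure (\Cref{lem:dwclos}) applies directly.

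For the converse $K,T'\vdash\varphi \Rightarrow K,T\vdash\varphi$, I would proceed by structural induction on $\varphi$. The literal cases are immediate, since the clauses for $p$ and $\lnot p$ are universally quantified over team members and therefore depend only on the underlying set. Conjunction reduces directly to the induction hypothesis. For disjunction, a witnessing split $T_1\sqcup T_2 = T'$ is lifted to a split of $T$ by placing, for each $w\in T_i$, every copy of $w$ in $T$ into a multiset $T_i^{\ast}$; then the underlying set of $T_i^{\ast}$ equals $T_i$, so the induction hypothesis yields $K,T_i^{\ast}\vdash\varphi_i$, and since $T_1^{\ast}\sqcup T_2^{\ast} = T$, we conclude $K,T\vdash\varphi_1\lor\varphi_2$.

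For the modal operators, the crucial observation is that in \Cref{semantics} the path variables $\pi_{t_j}$ are indexed by the world $t_j$, so duplicate team members share a common path choice. Consequently, any family of witnessing paths for $T'$ already serves for $T$, and symmetrically any path family quantified universally over $T$ restricts to one over $T'$. At every time step the successor multiset on the $T$-side differs from the one on the $T'$-side only in multiplicities while agreeing on underlying sets, so the induction hypothesis (applied to a strictly smaller subformula) closes each case. The main subtle point is exactly this underlying-set coincidence of successor teams; once it is observed, the cases for $\X$, $\U$, and $\W$ all follow the same template and the proof is complete.
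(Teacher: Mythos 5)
Your forward direction is exactly the paper's: the underlying set satisfies $T'\multisubseteq T$, so \Cref{lem:dwclos} applies at once. For the converse, however, you have overlooked that the paper defines $T\multisubseteq W$ purely element-wise (``each element of $T$ is also an element of $W$''), with no condition on multiplicities. Consequently $T\multisubseteq T'$ holds just as well, and a second application of \Cref{lem:dwclos} immediately gives $K,T'\vdash\varphi\Rightarrow K,T\vdash\varphi$. That is the paper's entire proof --- two invocations of downward closure and no induction.

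The induction you substitute for this observation has a genuine gap in the asynchronous until and weak-until cases. There the paths are written $\pi_j$ and the time points $k_j,i_j$ are quantified \emph{per occurrence} (indexed by position, not by world), so even if you give every copy of a world $w$ the same path $\pi_w$ and the same bound $k_w$, the intermediate condition still ranges over all tuples $(i_1,\dots,i_n)$ independently. A tuple that places two copies of $w$ at distinct positions $i\neq i'$ yields an intermediate team containing both $\pi_w(i)$ and $\pi_w(i')$, whose underlying set is strictly larger than that of any intermediate team arising from $T'$ (which holds exactly one point per world). Your claimed ``underlying-set coincidence of successor teams'' --- which you yourself flag as the main subtle point --- therefore fails exactly there, and the induction hypothesis cannot be applied to those teams. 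The corollary is still true, but patching your induction at this point would require the union closure of $\amodels$ (a lemma the paper proves only afterwards); the clean repair is the downward-closure observation above.
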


A team satisfaction relation $\vdash$ is said to be \emph{union closed} if for every Kripke structure $K$, formula $\varphi$, and teams $T$ and $T'$ of $K$, the following holds:
\[
\text{If $K,T\vdash\varphi$ and $K,T'\vdash\varphi$ then $K,T\sqcup T'\vdash\varphi$.}
\]

\begin{lemma}[Union closure]
 $\amodels$ is union closed.
\end{lemma}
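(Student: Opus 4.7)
The plan is to prove the lemma by structural induction on $\varphi$, paralleling the case analysis of \Cref{lem:dwclos}. The literal and conjunction cases are immediate: satisfaction of $p$ or $\lnot p$ is a universal labelling condition on the worlds of the team and is manifestly preserved under multiset union, while $\land$ distributes over the induction hypothesis. For $\varphi\lor\psi$, given witnessing splits $T=T_1\sqcup T_2$ and $T'=T_1'\sqcup T_2'$, the split $T\sqcup T'=(T_1\sqcup T_1')\sqcup(T_2\sqcup T_2')$ witnesses the disjunction on the combined team by the induction hypothesis.

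For the modal clauses my plan is to concatenate the witness path-families. For $\E\X\varphi$, glue the existentially chosen paths for $T$ and $T'$ into a single family indexed by $T\sqcup T'$; the resulting multiset $\multibigcup_{t\in T\sqcup T'}\mset{\pi_t(2)}$ decomposes as a disjoint union of the two sub-multisets, both of which satisfy $\varphi$, so the induction hypothesis finishes the case. For $\A\X\varphi$, start from an arbitrary family of paths indexed by $T\sqcup T'$, restrict it to $T$ and to $T'$ separately, invoke the assumption on each restriction, and combine via the induction hypothesis. For the two until clauses combine both the paths and the stop-indices $k_t$ (existentially in the $\E$-case; for the $\A$-case restrict an arbitrary family of paths for $T\sqcup T'$ and pick the indices guaranteed by the assumption on each side), then apply the induction hypothesis twice: once to show that the multiset at the combined stop-indices satisfies $\psi$, and once to show that at any indices strictly below the combined stop-indices the multiset satisfies $\varphi$.

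The main obstacle is the weak-until case. Each of the two witnesses may independently satisfy either the ``$\varphi$ always'' disjunct or the ``eventually $\psi$ with $\varphi$ before'' disjunct of the semantic clause. When both witnesses are of the same type, the construction sketched above for until goes through mutatis mutandis. The mixed subcase---say, the first disjunct on $T$ and the second on $T'$---is the delicate one: here one must manufacture a single coherent witness for $T\sqcup T'$ from witnesses of different shapes. I would exploit the asynchrony of $\amodels$, which lets different members of $T\sqcup T'$ use entirely independent stop-indices, together with the downward closure from \Cref{lem:dwclos} to pass freely between the combined team and its sub-teams while verifying that the appropriate multisets satisfy $\varphi$ and $\psi$.
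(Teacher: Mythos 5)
Your treatment of the literals, the Boolean connectives, $\X$, and (strong) until is correct and is essentially the argument the paper gives: combine the two witness families of paths and stop-indices and apply the induction hypothesis to $\varphi$ and $\psi$ on the resulting multisets (the paper only writes out the case $\E[\varphi\U\psi]$ and declares the rest analogous). You also deserve credit for isolating the mixed weak-until subcase as the real difficulty; the paper passes over it with the phrase ``completely analogous'', and it is not.

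However, your plan for that subcase does not close the gap, and in fact no argument of the sketched shape can. The disjunction in the clause for $\mathsf{P}[\varphi\W\psi]$ is global over the whole team: either every member's path must satisfy $\varphi$ at all positions jointly, or every member must supply a stop-index at which $\psi$ holds jointly. Asynchrony of the stop-indices and downward closure (\Cref{lem:dwclos}) give you no way to manufacture a stop-index for a member whose witness was the ``always $\varphi$'' disjunct, nor to extend ``$\varphi$ before $k$'' to ``$\varphi$ everywhere'' for a member whose witness was the other disjunct. Concretely, take $W=\{a,b,c\}$, $R=\{(a,a),(b,c),(c,c)\}$, $\eta(a)=\{p\}$, $\eta(b)=\{q\}$, $\eta(c)=\emptyset$. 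Then $K,\mset{a}\amodels\A[p\W q]$ (first disjunct along the unique path $a,a,\dots$) and $K,\mset{b}\amodels\A[p\W q]$ (second disjunct with $k=1$), but for the team $\mset{a,b}$ the first disjunct fails because $p\notin\eta(b)$, and the second fails because $q$ holds nowhere on the path from $a$, so $K,\mset{a,b}\amodels\A[p\W q]$ does not hold. Thus under the literal reading of the semantics the claimed closure fails for weak until (and would contradict \Cref{cor:asynch-single}); to complete the proof one must first reformulate the weak-until clause so that the choice between the two disjuncts is made per team member, after which your until-style argument goes through.
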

\begin{proof}
 This property again can be shown via induction on structure on $\varphi$. The interesting parts of the proof are the cases for the temporal operators $\mathsf P[\varphi\U\psi]$ and $\mathsf P[\varphi\W\psi]$. We will show the proof for $\mathsf P[\varphi\U\psi]$ only. The proof for $\W$ is completely analogous. Now let $K,T\amodels\mathsf P[\varphi\U\psi]$ and $K,T'\amodels\mathsf P[\varphi\U\psi]$. For simplicity we show the result only for $\mathsf P=\E$. Let $T=\mset{t_1,\dots,t_n}$ be a team. Then $K,T\amodels\mathsf \E[\varphi\U\psi]$ implies that there are paths $\pi_1\in\Pi(t_1),\dots,\pi_n\in\Pi(t_n)$ and natural numbers $k_1,\dots,k_n$ such that $K,\mset{\pi_j(k_j)}\amodels\psi$ and for all $1\leq i_j<k_j$ it holds that $K,\mset{\pi_j(i_j)}\amodels\varphi$ for $1\leq j\leq n$. Analogously let $T'=\mset{s_1,\dots,s_m}$ be a team. Then $K,T'\amodels\E[\varphi\U\psi]$ implies that there are paths $\pi'_1\in\Pi(s_1),\dots,\pi'_m\in\Pi(s_m)$ and natural numbers $k'_1,\dots,k'_m$ such that $K,\mset{\pi'_j(k_j)}\amodels\psi$ and for all $1\leq i'_j<k'_j$ it holds that $K,\mset{\pi'_j(i_j)}\amodels\varphi$ for $1\leq j\leq m$. Thus clearly $K,T\sqcup T'\amodels\E[\varphi\U\psi]$ and the claim follows.
\end{proof}
Note that the semantics $\smodels$ is not union closed due to the observation depicted in \Cref{fig:async-vs-synch}. 

The previous lemmas lead to the following interesting corollary which allows one to consider only the elements of the team instead of the complete team together. 
This will later prove to be important in the classification of the complexity of the model checking problem for asynchronous semantics.

\begin{corollary}\label{cor:asynch-single}
 For every Kripke structure $K=(W,R,\eta)$ and every team $T$ of $K$ the following equivalence holds:
 $$
 K,T\amodels\varphi\Leftrightarrow \forall t\in T: K,t\models\varphi.
 $$
\end{corollary}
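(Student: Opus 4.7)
The plan is to assemble the corollary directly from the three structural results already established for $\amodels$: singleton equivalence (\Cref{lem:singleton}), downward closure (\Cref{lem:dwclos}), and union closure (the preceding lemma), with the empty team property handling the degenerate case.

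For the forward direction, I would assume $K,T\amodels\varphi$ and fix an arbitrary $t\in T$. Since $\mset{t}\multisubseteq T$, downward closure yields $K,\mset{t}\amodels\varphi$, and then singleton equivalence converts this to $K,t\models\varphi$.

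For the backward direction, suppose $K,t\models\varphi$ for every $t\in T$. If $T=\emptyset$, the empty team property immediately gives $K,T\amodels\varphi$. Otherwise write $T=\mset{t_1,\dots,t_n}$; singleton equivalence gives $K,\mset{t_i}\amodels\varphi$ for each $i$, and an $n{-}1$-fold application of union closure to the decomposition $T=\mset{t_1}\sqcup\cdots\sqcup\mset{t_n}$ yields $K,T\amodels\varphi$.

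I do not expect any real obstacle: the statement is essentially a packaging of the three lemmas. The only point requiring a moment of care is to observe that downward closure is formulated for multi-subsets, so picking out a singleton $\mset{t}$ from the multiset $T$ is legitimate even when $t$ occurs with multiplicity; and that union closure, although stated for two teams, iterates to any finite decomposition. No separate induction on $\varphi$ is needed, since the inductions were already carried out inside the three lemmas being invoked.
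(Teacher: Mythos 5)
Your proof is correct and is precisely the argument the paper intends: the corollary is stated as an immediate consequence of downward closure (\Cref{lem:dwclos}), singleton equivalence (\Cref{lem:singleton}), and union closure of $\amodels$, assembled exactly as you do (downward closure plus singleton equivalence for the forward direction, singleton equivalence plus iterated union closure --- with the empty team property for the degenerate case --- for the converse). No further comment is needed.
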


\section{Expressive power}\label{sec:expressive}
In this section, we discuss in more details the relationship between the expressive powers of team $\CTL$ with the synchronous semantics and team $\CTL$ with the asynchronous semantics.

\begin{definition}
For each $\CTL$-formula $\varphi$, define
\begin{align*}
&\mathfrak{F}^a_\varphi:=\{(K,T)\mid K,T\amodels\varphi\} \text{ and}\\
&\mathfrak{F}^s_\varphi:=\{(K,T)\mid K,T\smodels\varphi\}.
\end{align*}

We say that $\varphi$ \emph{defines the class $\mathfrak{F}^a_\varphi$ in asynchronous semantics} (of $\CTL$). Analogously, we say that $\varphi$ \emph{defines the class $\mathfrak{F}^s_\varphi$ in synchronous semantics} (of $\CTL$). A class $\mathfrak{F}$ of pairs of Kripke structures and teams is \emph{definable in asynchronous semantics} (in synchronous semantics), if there exists some $\psi\in\CTL$ such that $\mathfrak{F}=\mathfrak{F}^a_\psi$ ($\mathfrak{F}=\mathfrak{F}^s_\psi$). Furthermore, for $k\in\mathbb{N}$, define
\begin{align*}
\mathfrak{F}^{a,k}_\varphi&:=\{(K,T)\mid K,T\amodels\varphi \text{ and } \lvert T\rvert \leq k \}, \text{ and}\\
\mathfrak{F}^{s,k}_\varphi&:=\{(K,T)\mid K,T\smodels\varphi \text{ and } \lvert T\rvert \leq k \}.
\end{align*}

We say that $\varphi$ \emph{$k$-defines the class $\mathfrak{F}^{a,k}_\varphi$ (resp., $\mathfrak{F}^{s,k}_\varphi$) in asynchronous (resp., synchronous) semantics} (of $\CTL$). The definition of \emph{$k$-definability} is analogous to that of definability. 
\end{definition}
Next we will show that there exists a class $\mathfrak{F}$ which is definable in asynchronous semantics, but is not definable in synchronous semantics.
\begin{theorem}
The class $\mathfrak F_{\EF p}^a$ is not definable in synchronous semantics.
\end{theorem}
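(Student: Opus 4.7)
I would argue by contradiction. Suppose a $\CTL$-formula $\psi$ synchronously defines $\mathfrak{F}^a_{\EF p}$. By \Cref{cor:asynch-single}, $\mathfrak{F}^a_{\EF p}$ consists of those $(K, T)$ for which every $t \in T$ traditionally satisfies $\EF p$. Combining this with \Cref{lem:singleton}, on singletons the synchronous semantics of $\psi$ reduces to the traditional semantics of $\EF p$, and more generally $K, T \smodels \psi$ must hold precisely when every $t \in T$ individually satisfies $\EF p$. So the synchronous semantics of $\psi$ is forced to be ``flat'', reducing pointwise to the traditional interpretation, while still being expressively equivalent to $\EF p$.

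To expose a contradiction I would work with the family $(K_n, T_n)_{n \ge 1}$, where $K_n$ is the disjoint union of $n$ chains $c_1^{(i)} \to c_2^{(i)} \to \cdots$ (augmented with a self-loop at the last state to ensure totality of $R$), with $p$ labelled only at $c_i^{(i)}$, and $T_n = \mset{c_1^{(1)}, \ldots, c_1^{(n)}}$. Each starting state traditionally satisfies $\EF p$, so \Cref{cor:asynch-single} yields $K_n, T_n \amodels \EF p$ and hence $K_n, T_n \smodels \psi$ by hypothesis.

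The heart of the plan is to show that for $n$ sufficiently large in terms of the syntax of $\psi$ this last satisfaction fails. The crucial observation is that in the synchronous semantics only the disjunction $\lor$ truly splits a team: conjunction keeps the team together, the modalities $\EX, \AX$ propagate it uniformly along (deterministic, in $K_n$) successor paths, and the until/weak-until operators insist on a common witness time $k$ at which the entire subteam must satisfy the right-hand formula. Since $\psi$ is traditionally equivalent to $\EF p$, its evaluation on any chain must eventually hit a literal $p$-check, and because $p$ occurs on chain $i$ of $K_n$ exclusively at position $i$, any subteam undergoing such a synchronous check is forced to be drawn from a single chain. Consequently the $n$ elements of $T_n$ must be distributed among the ``synchronisation branches'' carved out by the $\lor$-nodes of $\psi$, whose number is bounded by a quantity $f(\psi)$ depending only on the syntax; taking $n > f(\psi)$ produces the contradiction.

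The main obstacle is to turn this intuition into a rigorous accounting when $\lor$-nodes are nested inside temporal operators. I would handle this by a structural induction on subformulas $\chi$ of $\psi$: for each $\chi$ and each synchronous descendant subteam $T'$ of $T_n$, bound the number of distinct source chains represented in $T'$ that can still be successfully routed to a $p$-literal, in terms of a $\lor$-counting measure on $\chi$. The base literal cases are immediate, the $\lor$-case is additive, $\land$ takes the maximum, the modalities $\EX,\AX$ preserve the bound via the bijective deterministic-path movement inside $K_n$, and the until/weak-until cases force the bound to collapse to one for any subteam that ultimately feeds a positive $p$-check. Combining these cases yields the desired $f(\psi)$ and thereby completes the contradiction.
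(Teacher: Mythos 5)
Your route is entirely different from the paper's, so let me first record what the paper actually does. Its proof is four lines on a single four-state structure $K$ (two disjoint lassos, one reaching $p$ at depth $1$, the other at depth $2$): both singletons $\mset{1},\mset{2}$ satisfy $\EF p$, the team $\mset{1,2}$ does not satisfy $\EF p$ synchronously, and the contradiction is obtained by applying \Cref{cor:asynch-single} to the assumed defining formula $\varphi$ under $\amodels$. That step is only available when $\varphi$ is evaluated \emph{asynchronously}; as printed, the argument refutes the existence of $\varphi$ with $\mathfrak F^a_\varphi=\mathfrak F^s_{\EF p}$, i.e.\ it establishes the converse non-definability (compare \Cref{conj:asynch}; the superscripts in this section are not mutually consistent). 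For the statement as you read it, no single finite structure can suffice: by the last theorem of \Cref{sec:expressive}, $\mathfrak F^{a,k}_{\EF p}$ \emph{is} synchronously $k$-definable via $\bigvee_{i\leq k}\EF p$, so a refutation needs teams of unbounded size. Your family $(K_n,T_n)$ is therefore the right shape of witness, and the derivation of $K_n,T_n\smodels\psi$ from \Cref{cor:asynch-single} is sound, provided you also stipulate that the terminal self-loop states are unlabelled and lie beyond every $p$-position (otherwise a large common $k$ makes $K_n,T_n\smodels\EF p$ after all).

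The genuine gap is the counting lemma, which is asserted rather than proved, and which is false in the purely syntactic form you state. An $\lor$ occurring in the left argument $\alpha$ of $\mathsf{P}[\alpha\U\beta]$ or $\mathsf{P}[\alpha\W\beta]$ is re-evaluated on the whole current team at every intermediate time point $i<k$, so a single $\lor$-node can split the team differently at each of unboundedly many global times; for instance $\E[(p\lor\lnot p)\U\G\lnot p]$ uses one disjunction to ``visit'' the $p$-positions of all $n$ chains at $n$ distinct times and is satisfied by every $T_n$. That formula is of course not singleton-equivalent to $\EF p$, but it shows that the bound $f(\psi)$ cannot come from counting $\lor$-leaves alone: it must be derived jointly with the semantic constraint that $\psi$ accepts exactly the singletons satisfying $\EF p$, and your induction has no mechanism for propagating such a constraint to arbitrary subformulas, which need not resemble $\EF p$ at all. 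The weak-until case compounds this: $\A[\alpha\W\beta]$ can be satisfied with no witness time $k$ whatsoever, so ``subteams that ultimately feed a positive $p$-check'' is not a well-defined quantity to induct on. These are exactly the obstacles (the $\G$/weak-until escape over infinite paths) that the paper itself names as the reason this direction of the separation remains open, so the missing lemma is the entire substance of the claim rather than a routine structural induction.
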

\begin{proof}
For the sake of a contradiction, assume that $\varphi$ is such that $\mathfrak F^a_\varphi = \mathfrak F_{\EF p}^s$. Consider the following Kripke model $K=(W,R,V)$, where $W=\{1,2,3,4\}$, $R=\{(1,4),(4,4),(2,3),(3,3)\}$, and $V(p)=\{1,3\}$. Clearly $K,\mset{1}\smodels\EF p$ and $K,\mset{2}\smodels\EF p$. Thus by our assumption, it follows that $K,\mset{1}\amodels\varphi$ and $K,\mset{2}\amodels\varphi$. From Corollary \ref{cor:asynch-single} it then follows that $K,\mset{1,2}\amodels\varphi$. But clearly $K,\mset{1,2}\nsmodels\EF p$.
\end{proof}
\begin{corollary}
For $k>1$, the class $\mathfrak F_{\EF p}^{a,k}$ is not $k$-definable in synchronous semantics.
\end{corollary}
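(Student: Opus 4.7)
My plan is to reduce the corollary to the preceding theorem by observing that the counterexample used in the theorem's proof already lives inside the $k$-bounded regime whenever $k \geq 2$. The teams $\mset{1}$, $\mset{2}$, and $\mset{1,2}$ appearing there all have cardinality at most $2$, so for every $k > 1$ they remain admissible in the restricted classes $\mathfrak F^{a,k}_{\EF p}$ and $\mathfrak F^{s,k}_\varphi$.

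Concretely, I would assume toward a contradiction that some $\varphi \in \CTL$ $k$-defines $\mathfrak F^{a,k}_{\EF p}$ in synchronous semantics, and then re-run the theorem's argument within the Kripke model $K$ from its proof. Each step of that argument transports satisfaction between $\smodels$ and $\amodels$ on a team of size $\leq 2 \leq k$, so the $k$-bounded equivalence is applicable at every invocation: Lemma \ref{lem:singleton} handles the singletons, the equivalence pushes satisfaction across to $\varphi$, Corollary \ref{cor:asynch-single} lifts to $\mset{1,2}$, and a final application of the equivalence yields a statement that collides with the explicit check $K, \mset{1,2} \nsmodels \EF p$ supplied in the theorem's proof.

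There is no genuine obstacle beyond verifying the team-size bound, and the hypothesis $k > 1$ is in fact sharp: for $k = 1$ only singleton teams appear, on which $\amodels$ and $\smodels$ both coincide with ordinary $\CTL$ satisfaction by Lemma \ref{lem:singleton}, so the formula $\EF p$ would itself trivially $1$-define $\mathfrak F^{a,1}_{\EF p}$ in synchronous semantics. Thus the whole content of the corollary is that the theorem's separating example already fits a bounded-team witness scheme for any $k \geq 2$, and no further construction is required.
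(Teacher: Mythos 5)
You have correctly identified that the only thing separating the corollary from the parent theorem is a team-size bound, and that all teams in the counterexample have cardinality at most $2\le k$; the problem is that the chain of implications you describe does not actually close under the hypothesis you assume. Suppose $\mathfrak F^{s,k}_\varphi=\mathfrak F^{a,k}_{\EF p}$. From $K,\mset{1}\amodels\EF p$ and $K,\mset{2}\amodels\EF p$ the equivalence gives $K,\mset{1}\smodels\varphi$ and $K,\mset{2}\smodels\varphi$, and \Cref{lem:singleton} together with \Cref{cor:asynch-single} lifts these to $K,\mset{1,2}\amodels\varphi$. But your ``final application of the equivalence'' is unavailable: the hypothesis only converts between statements of the form $K,T\smodels\varphi$ and $K,T\amodels\EF p$, and neither $K,\mset{1,2}\amodels\varphi$ nor the check $K,\mset{1,2}\nsmodels\EF p$ has either of these forms. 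Since $K,\mset{1,2}\amodels\EF p$ does hold, all the hypothesis yields at $\mset{1,2}$ is $K,\mset{1,2}\smodels\varphi$, which contradicts nothing.

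This failure cannot be patched, because the statement as written is false: the theorem immediately following the corollary shows $\mathfrak F^{a,k}_\varphi=\mathfrak F^{s,k}_{\varphi'}$ for $\varphi'=\bigvee_{1\le i\le k}\varphi$, so $\bigvee_{1\le i\le k}\EF p$ does $k$-define $\mathfrak F^{a,k}_{\EF p}$ in synchronous semantics (on the counterexample model, $K,\mset{1,2}\smodels\EF p\lor\EF p$ via the split $\mset{1}\sqcup\mset{2}$). The argument you outline is sound only for the mirror-image claim: assuming $\mathfrak F^{a,k}_\varphi=\mathfrak F^{s,k}_{\EF p}$, one gets $K,\mset{1}\amodels\varphi$ and $K,\mset{2}\amodels\varphi$, hence $K,\mset{1,2}\amodels\varphi$ by \Cref{cor:asynch-single}, hence $K,\mset{1,2}\smodels\EF p$, contradicting $K,\mset{1,2}\nsmodels\EF p$. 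This proves that for $k>1$ the class $\mathfrak F^{s,k}_{\EF p}$ is not $k$-definable in \emph{asynchronous} semantics, which is what the parent theorem's displayed proof (whose hypothesis likewise reads $\mathfrak F^{a}_\varphi=\mathfrak F^{s}_{\EF p}$) actually establishes; the superscripts in the corollary need to be exchanged before your reduction goes through. Your observation that $k>1$ is sharp is correct and carries over to the repaired statement.
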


\begin{conjecture}\label{conj:asynch}
The class $\mathfrak F_{\EF p}^s$ is not definable in asynchronous semantics.
\end{conjecture}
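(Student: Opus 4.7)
The plan is to combine Lemma \ref{lem:singleton} with Corollary \ref{cor:asynch-single} to show that any asynchronously definable class is forced to be a pointwise condition on the team members, and then to exhibit a synchronous $\EF p$ configuration that violates this closure. Suppose for contradiction that a $\CTL$-formula $\varphi$ satisfies $\mathfrak F^a_\varphi = \mathfrak F^s_{\EF p}$. Restricting to singleton teams $\mset{w}$ and applying Lemma \ref{lem:singleton} to both sides of the assumed equality yields that, for every Kripke structure $K$ and every world $w$,
\[
K,w\models\varphi \iff K,\mset{w}\amodels\varphi \iff K,\mset{w}\smodels\EF p \iff K,w\models\EF p.
\]

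Next I would unfold the asynchronous satisfaction of $\varphi$ on an arbitrary team via Corollary \ref{cor:asynch-single}: for any team $T$ of $K$,
\[
K,T\amodels\varphi \iff \forall t\in T\colon K,t\models\varphi \iff \forall t\in T\colon K,t\models\EF p.
\]
Combined with the hypothesis $\mathfrak F^a_\varphi = \mathfrak F^s_{\EF p}$, this forces the pointwise characterisation $K,T\smodels\EF p \iff \forall t\in T\colon K,t\models\EF p$, valid for every Kripke structure and every team.

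To refute this, I would reuse the Kripke structure from the proof of the preceding theorem: $W=\{1,2,3,4\}$, $R=\{(1,4),(4,4),(2,3),(3,3)\}$, $V(p)=\{1,3\}$, with $T=\mset{1,2}$. Both $K,1\models\EF p$ (since $1\in V(p)$) and $K,2\models\EF p$ (since $\pi(2)=3\in V(p)$ along the unique path from $2$), so the right-hand side of the derived equivalence holds. However, the unique path from $1$ visits a $p$-state only at position $1$ while the unique path from $2$ visits a $p$-state only from position $2$ onwards, so no common step $k$ can witness synchronous satisfaction; hence $K,T\nsmodels\EF p$, contradicting the equivalence.

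The hard part is purely conceptual rather than technical: one must recognise that Corollary \ref{cor:asynch-single} collapses asynchronous definability to a pointwise condition on singleton members, and that this pointwise condition is strictly weaker than synchronous satisfaction of an $\U$-formula, whose witnesses across distinct team members must align in time. Once this reduction is made, essentially any Kripke structure offering two worlds whose paths hit $p$ at disjoint positions suffices to conclude; the same witness already used in the previous theorem does the job.
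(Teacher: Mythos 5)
Your argument is correct for the statement as literally written, and it is in substance the same argument the paper already gives for the theorem immediately preceding the conjecture: that proof likewise assumes $\mathfrak F^a_\varphi=\mathfrak F^s_{\EF p}$, uses \Cref{cor:asynch-single} to pass from $K,\mset{1}\amodels\varphi$ and $K,\mset{2}\amodels\varphi$ to $K,\mset{1,2}\amodels\varphi$, and refutes this with exactly the same four-state model. Your only cosmetic difference is that you first package \Cref{cor:asynch-single} and \Cref{lem:singleton} into the global pointwise characterisation $K,T\smodels\EF p\Leftrightarrow\forall t\in T\colon K,t\models\EF p$ before exhibiting the counterexample, whereas the paper derives the contradiction directly on the team $\mset{1,2}$. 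All the individual steps check out.

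What you should be aware of is that the paper labels this statement a conjecture and lists it as open in the Future Work section, attributing the difficulty to the weak-until operator. Since the preceding theorem's statement (that $\mathfrak F^a_{\EF p}$ is not definable in synchronous semantics) does not match its own proof (which assumes asynchronous definability of $\mathfrak F^s_{\EF p}$), the superscripts in this section appear to have been transposed, and the question the authors actually mean to leave open is the converse: whether the flat class $\mathfrak F^a_{\EF p}$ is definable in \emph{synchronous} semantics. Your method cannot address that version, because it rests entirely on the union closure of $\amodels$; the relation $\smodels$ is not union closed, so there is no collapse to a pointwise condition, and one would instead have to rule out every candidate $\CTL$-formula, including those using $\W$. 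In short: a correct proof of the conjecture as stated, coinciding with the paper's own neighbouring argument, but almost certainly not a resolution of the problem the authors intended to pose.
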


\begin{theorem}
For every $k\in\mathbb{N}$ and $\varphi\in\CTL$, the class $\mathfrak F_{\varphi}^{s,k}$ is $k$-definable in asynchronous semantics.
\end{theorem}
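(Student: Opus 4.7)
The plan is, given $\varphi \in \CTL$ and $k \in \N$, to take the $k$-fold disjunction $\psi := \underbrace{\varphi \lor \varphi \lor \cdots \lor \varphi}_{k \text{ copies}}$ as the defining formula and to argue that on pairs $(K,T)$ with $|T|\le k$ the satisfaction of $\psi$ in one semantics matches the satisfaction of $\varphi$ in the other. The key mechanism is that the team semantics of $\lor$ lets a team of size at most $k$ be split into at most $k$ singleton subteams (padded by empty ones), and on singletons and empty teams the two semantics coincide with classical $\CTL$ by Lemma~\ref{lem:singleton} and the empty team property.

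First I would record the pointwise collapse. By Corollary~\ref{cor:asynch-single}, the asynchronous satisfaction $K,T\amodels\varphi$ is equivalent to the pointwise condition $\forall t\in T\colon K,t\models\varphi$, and by Lemma~\ref{lem:singleton} the latter is in turn equivalent to $K,\mset{t}\smodels\varphi$ for every $t\in T$. The empty team property handles empty pieces of any partition. Thus all reasoning about singleton or empty subteams reduces to classical $\CTL$ on worlds.

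I would then prove the equivalence by a partition argument. For one direction: assume the team $T$ with $|T|\le k$ is in the hypothesis class, so in particular every point of $T$ classically satisfies $\varphi$ (via downward closure, Lemma~\ref{lem:dwclos}, and Lemma~\ref{lem:singleton}). Partition $T$ into its $|T|$ singleton subteams padded by $k-|T|$ empty subteams; since $|T|\le k$ this is legal. Each singleton satisfies $\varphi$ by Lemma~\ref{lem:singleton} and each empty subteam by the empty team property, so the team semantics of $\lor$ yields $K,T\vdash\psi$ in the target semantics. For the converse direction, any witnessing partition $T = T_1\sqcup\cdots\sqcup T_k$ with $K,T_i\vdash\varphi$ produces, via downward closure and Lemma~\ref{lem:singleton}, the pointwise statement $\forall t\in T\colon K,t\models\varphi$, and Corollary~\ref{cor:asynch-single} closes the equivalence.

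The main obstacle will be matching the width of the disjunction to the largest possible team size: exactly $k$ disjuncts are needed so that a worst-case team of size $k$ can be partitioned into singletons. The bound $|T|\le k$ is therefore indispensable, since no finite disjunction in $\CTL$ would suffice for unbounded teams; this aligns with Conjecture~\ref{conj:asynch}, which asserts that the unrestricted version of the analogous definability claim fails.
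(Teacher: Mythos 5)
Your proposal is essentially the paper's own proof: the same defining formula $\varphi':=\bigvee_{1\leq i\leq k}\varphi$, and the same chain through \Cref{cor:asynch-single}, \Cref{lem:singleton}, the empty team property, and downward closure (\Cref{lem:dwclos}). One point needs to be pinned down, because your write-up leaves ``hypothesis class'' and ``target semantics'' unassigned. The equivalence your partition argument actually establishes (and the one the paper's proof establishes) is $K,T\amodels\varphi \Leftrightarrow \forall t\in T\colon K,t\models\varphi \Leftrightarrow K,T\smodels\varphi'$ for $\lvert T\rvert\leq k$: note that your converse direction terminates, via \Cref{cor:asynch-single}, at $K,T\amodels\varphi$, i.e., it shows that the \emph{asynchronous} class of $\varphi$ is $k$-defined by $\varphi'$ in \emph{synchronous} semantics. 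The mirror-image reading suggested by the theorem's literal wording, namely $K,T\smodels\varphi \Leftrightarrow K,T\amodels\varphi'$, does not follow from this argument: from $K,T\amodels\varphi'$ one only recovers pointwise classical satisfaction, which is strictly weaker than $K,T\smodels\varphi$ (take $\varphi=\EF p$ and $k=2$ with the structure used in the non-definability theorem; the resulting class is not closed under unions of singletons, whereas every asynchronously definable class is). So your proof is correct for the statement the paper in fact proves; just fix the roles of the two semantics explicitly before splitting the team, and note that your closing appeal to \Cref{conj:asynch} concerns the direction that the partition argument cannot deliver.
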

\begin{proof}
Fix $k\in \mathbb{N}$ and $\varphi\in\CTL$. Define
\[
\varphi' := \bigvee_{1\leq i \leq k} \varphi.
\]

We will show that $\mathfrak F_{\varphi}^{a,k} = \mathfrak F_{\varphi'}^{s,k}$. Let $K$ be an arbitrary Kripke structure and $T$ be a team of $K$ of size at most $k$. Then it holds
\begin{align*}
K,T \amodels \varphi \quad\Leftrightarrow\quad& \forall w\in T : K,\mset{w} \amodels \varphi \\
\Leftrightarrow\quad& \forall w\in T : K,\mset{w} \smodels \varphi \\
\Leftrightarrow\quad& K,T \smodels \varphi'.
\end{align*}

The first equivalence follows by \Cref{cor:asynch-single}, the second by \Cref{lem:singleton}, and the last by the semantics of disjunction and the downward closure property.
\end{proof}

\section{Complexity Results}\label{sec:complexity}
In this section we classify the problems with respect to the computational complexity. At first we start with the asynchronous semantics. We will begin with model checking and will finish with satisfiability.

In the following we define the most important decision problems in these logics.
\problemdef{$\aMC$}{A Kripke structure $K$, a team $T$ of $K$, a formula $\varphi\in\CTL$.}{$K,T\amodels\varphi$?}


\problemdef{$\aSAT$}{A formula $\varphi\in\CTL$.}{Does there exists a Kripke structure $K$ and a non-empty team $T$ of $K$ s.t.\ $K,T\amodels\varphi$?}


%

Similarly we write $\sMC$, resp., $\sSAT$ for the variants with synchronized semantics.

\subsection{Model Checking}
In this subsection we investigate the computational complexity of model checking. For usual $\CTL$ model checking the following proposition summarizes what is known.

\begin{proposition}[\cite{clemsi86,sc02}]\label{prop:MC-P}
 Model checking for $\CTL$ formulas is $\P$-complete.
\end{proposition}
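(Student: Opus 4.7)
The plan is to prove the upper and lower bound separately, both being folklore results from the cited works, so I will just sketch the ideas.

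For the upper bound, I would give the standard bottom-up labelling algorithm: process the subformulas of $\varphi$ in order of increasing syntactic size, and for each subformula $\psi$ compute the set $\llbracket\psi\rrbracket_K := \{w\in W\mid K,w\models\psi\}$, storing it as a bit vector of length $|W|$. The literal and Boolean cases are direct set operations. For $\E\X\psi$ (respectively $\A\X\psi$), I would mark $w$ iff some (respectively every) $R$-successor of $w$ is already marked for $\psi$. For $\E[\varphi\U\psi]$, I would compute the least fixed point of the operator $X \mapsto \llbracket\psi\rrbracket_K \cup (\llbracket\varphi\rrbracket_K \cap \{w : \exists w'\, (wRw' \wedge w'\in X)\})$, which stabilises in at most $|W|$ iterations. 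The $\A\U$ case is analogous with a conjunctive successor test; the weak-until cases $\E\W$ and $\A\W$ reduce to greatest fixed points of closely related operators (equivalently, $\logicOpFont{P}[\varphi\W\psi]$ corresponds to $\logicOpFont{P}[\varphi\U\psi] \vee \logicOpFont{P}\G\varphi$, and $\logicOpFont{P}\G\varphi$ itself is a greatest fixed point). Each fixed-point computation runs in time polynomial in $|K|$, and there are $O(|\varphi|)$ subformulas, so total time is polynomial in $|K|+|\varphi|$.

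For the lower bound I would reduce from the \textbf{Monotone Circuit Value Problem} (MCVP), which is $\P$-complete under logspace reductions. Given a monotone Boolean circuit $C$ with input values, I would construct a Kripke structure whose states are the gates of $C$, add a self-loop at every gate so the transition relation is total, orient an edge from each gate to each of its inputs, and label each \textsc{true} input gate with a designated proposition $p$. The construction can be carried out in logspace since it is essentially a syntactic translation of the circuit's wiring. I would then build, in logspace, a $\CTL$-formula $\varphi_C$ that mimics the gate structure: conjunctions of the inputs for $\land$-gates (expressed using $\A\X$) and disjunctions via $\E\X$ for $\lor$-gates; the output gate evaluates to \textsc{true} iff the constructed formula holds at the output state. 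Alternatively, one can reduce from the $\P$-complete \emph{alternating graph accessibility problem}, using $\A\X$ and $\E\X$ directly to express universal and existential successor quantification.

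The main obstacle is not really conceptual but bookkeeping: one must verify that the weak-until operator, which is not always treated separately in the classical references, is also amenable to a polynomial fixed-point algorithm, and that the reduction for the lower bound can be realised in logspace (the standard MCVP reduction is straightforward once one fixes how gates map to $\E\X/\A\X$ patterns, but care is needed to ensure that only polynomially many gates are used and that the formula encoding is of linear size). Since both directions are well-established in \cite{clemsi86,sc02}, I would simply cite these works for the final write-up and keep the sketch above as justification.
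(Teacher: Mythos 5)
The paper itself gives no proof of this proposition---it is imported verbatim from \cite{clemsi86,sc02}---so your sketch is being measured against the standard argument from those sources, which it reproduces in outline; however, two steps as written would fail. In the upper bound, the parenthetical equivalence $\logicOpFont{P}[\varphi\W\psi]\equiv\logicOpFont{P}[\varphi\U\psi]\vee\logicOpFont{P}\G\varphi$ is false for $\logicOpFont{P}=\A$: if $w$ has one outgoing path on which $\varphi$ holds forever but $\psi$ never holds, and another that reaches $\psi$ after a $\varphi$-prefix but on which $\varphi$ does not hold forever, then $K,w\models\A[\varphi\W\psi]$ although both $\A[\varphi\U\psi]$ and $\A\G\varphi$ fail, since each disjunct quantifies over \emph{all} paths separately. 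Your primary route is the correct one---compute $\A[\varphi\W\psi]$ directly as the greatest fixed point of the operator mapping $X$ to the set of states that satisfy $\psi$, or satisfy $\varphi$ and have all $R$-successors in $X$---so the ``equivalently'' clause should simply be deleted.

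In the lower bound, ``a formula $\varphi_C$ that mimics the gate structure'' conceals the real difficulty: a circuit is a DAG, and unfolding a DAG into a formula tree is exponential in general, so the reduction as stated does not yield a polynomial-size (let alone logspace-computable) formula. The standard repair, implicit in the cited works, is to reduce from the $\P$-complete \emph{layered, alternating, monotone} circuit value problem (equivalently, alternating reachability on a levelled DAG): with $\wedge$- and $\vee$-gates strictly alternating by level, a \emph{single} formula $\E\X\A\X\E\X\cdots p$ whose length equals the circuit depth, evaluated at the output gate, decides the circuit, and no per-gate formula is needed. Note also that self-loops should be added only at the sink (input) nodes to make $R$ total; a self-loop at an internal $\wedge$-gate would force that gate itself to satisfy the remaining suffix under $\A\X$ and destroy the level-by-level correspondence. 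With these corrections your argument coincides with the standard one, and deferring the final write-up to the citations, as the paper does, is appropriate.
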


At first we investigate the case for asynchronous semantics. Through combinations of the previous structural properties of $\amodels$ it is possible to show the same complexity degree.

\begin{theorem}
 $\aMC$ is $\P$-complete.
\end{theorem}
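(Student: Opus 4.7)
The plan is to handle both directions, and crucially, to observe that almost all of the work has already been done by \Cref{cor:asynch-single} and \Cref{lem:singleton}, together with \Cref{prop:MC-P}.

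For membership in $\P$, I would appeal directly to \Cref{cor:asynch-single}: given an input $(K,T,\varphi)$, the relation $K,T\amodels\varphi$ holds iff $K,t\models\varphi$ holds in the ordinary pointed $\CTL$ semantics for every $t\in T$. So the algorithm iterates over the (at most $|T|$) underlying worlds of $T$ and runs a standard $\CTL$ model-checking procedure on each pair $(K,t,\varphi)$. By \Cref{prop:MC-P} each such check is in $\P$, and since there are at most linearly many of them, the overall running time is polynomial in $|K|+|T|+|\varphi|$.

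For $\P$-hardness, I would give a trivial logspace reduction from ordinary $\CTL$ model checking, which is $\P$-hard by \Cref{prop:MC-P}. Given an instance $(K,w,\varphi)$ of $\CTL$ model checking, output the $\aMC$ instance $(K,\mset{w},\varphi)$; this is clearly computable in logarithmic space. By \Cref{lem:singleton} we have $K,\mset{w}\amodels\varphi$ iff $K,w\models\varphi$, so the reduction is correct.

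Combining the two directions yields $\P$-completeness. There is essentially no technical obstacle here: the nontrivial content has been absorbed into the structural characterisation of $\amodels$ given by \Cref{cor:asynch-single} (itself a consequence of singleton equivalence, downward closure, and union closure), which collapses asynchronous team model checking to $|T|$-fold pointed $\CTL$ model checking. If anything, the only thing to double-check is that one really can do $|T|$ independent $\CTL$ model checks in polynomial total time; this is immediate because each check is polynomial in $|K|+|\varphi|$ and $|T|\leq |W|$, but it is worth stating explicitly so that the argument does not accidentally inherit a polynomial blow-up from treating $T$ as a multiset rather than its underlying set.
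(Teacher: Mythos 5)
Your proposal is correct and follows essentially the same route as the paper: the upper bound via \Cref{cor:asynch-single} reducing to $|T|$-fold applications of the standard $\CTL$ model-checking algorithm, and the lower bound via the trivial singleton-team reduction justified by \Cref{lem:singleton} and \Cref{prop:MC-P}. The only difference is that you spell out the details (including the harmless multiset issue) that the paper leaves implicit.
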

\begin{proof}
 The lower bound is immediate from usual CTL model checking by \Cref{prop:MC-P}. For the upper bound we apply \Cref{cor:asynch-single} and separately use for each member of the given team the usual CTL model checking algorithm.
\end{proof}

Now we turn to the model checking problem for synchronous semantics. Here we show that the problem becomes intractable under reasonable complexity class separation assumptions, i.e., $\P\neq\PSPACE$. The main idea is to exploit the synchronous semantics in a way to literally check in parallel all clauses for a given quantified Boolean formula for satisfiability for a set of relevant assignments.

\newcommand{\lits}{\textrm{literals}}
\newcommand{\var}{\textrm{var}}
\begin{theorem}\label{lem:mcs-pspace}
 $\sMC$ is $\PSPACE$-hard.
\end{theorem}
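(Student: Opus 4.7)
The plan is to give a logspace reduction from the canonical $\PSPACE$-complete problem QBF to $\sMC$. Given a QBF $\Phi = Q_1 x_1 \cdots Q_n x_n\, \psi$ in prenex form with matrix $\psi = C_1 \land \cdots \land C_m$ in CNF, I would construct a Kripke structure $K$, a team $T$ of $K$, and a $\CTL$-formula $\varphi$, each of size polynomial in $|\Phi|$, such that $K,T \smodels \varphi$ iff $\Phi$ is true.

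Concretely, the Kripke structure $K$ is a layered graph: for each $1 \leq i \leq n$, layer $i$ contains two states $u_i^0, u_i^1$ representing the assignments $x_i = 0$ and $x_i = 1$, labeled by fresh propositions $p_i^0, p_i^1$ respectively; edges chain layer $i$ to layer $i+1$ (with a self-looping sink added to keep the relation total), and the last layer is followed by clause-verifier states $c_1, \ldots, c_m$ that witness satisfaction of each clause $C_j$ along a path tracing an assignment through layers $1, \ldots, n$. The team $T$ is then initialised with one starting member per clause of $\psi$, so that $|T| = m$, allowing every clause to be checked in parallel as the team advances. The formula $\varphi$ mirrors the quantifier prefix of $\Phi$ by a nested block of $\E\X$/$\A\X$ operators (one per variable), terminating in a subformula that asserts every team member reaches a state witnessing satisfaction of its associated clause.

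The crucial ingredient is the synchronous aspect of the $\X$, $\U$, $\W$ clauses: the witness step index $k$ is shared by the whole team and the intermediate-step conditions are evaluated on the combined multiset $\multibigcup_j\{\pi_{t_j}(i)\}$. This is what lets the labels $p_i^b$ couple the $m$ clause-copies into a single common assignment, in stark contrast with the asynchronous semantics, where \Cref{cor:asynch-single} decomposes the check clause-by-clause and places $\aMC$ in $\P$. The main technical obstacle I anticipate is rigging the intermediate conditions so that at every variable layer the team members are \emph{forced} onto the same side of the branching; otherwise different clauses would be evaluated under different assignments and the reduction would break. Once this coupling is secured, correctness follows by induction on the quantifier prefix: the outer $\A$/$\E$ pattern reproduces QBF alternation, and the common $k$ of synchronous until guarantees that all clauses are assessed under one and the same jointly chosen assignment, yielding $\PSPACE$-hardness of $\sMC$.
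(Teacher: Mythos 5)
Your high-level strategy matches the paper's: a logspace reduction from QBF validity, simulating the quantifier prefix by an alternating block of $\E\X/\A\X$ operators and using the team to check all clauses in parallel, with synchronicity enforcing consistency. However, the construction as you sketch it has a genuine gap, and it is exactly the one you flag yourself: nothing in your layered graph forces the $m$ clause-members onto the same side of the branching at layer $i$. The synchronous semantics shares only the time index $k$ of until/weak-until across the team; the path quantification in $\E\X$ (and in until) is still \emph{independent per team member}, so each clause-copy may pick $u_i^0$ or $u_i^1$ on its own. Worse, for a universally quantified $x_i$ your $\A\X$ ranges over all $2^m$ combinations of successors of the $m$ copies, including all the inconsistent ones, and the remainder of the formula would have to hold on every such combination. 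Your proposed fix --- that the shared $k$ together with the labels $p_i^b$ couples the copies --- does not work for the structure you describe, because $u_i^0$ and $u_i^1$ sit at the \emph{same depth}; a common step index cannot distinguish which branch a member took when the distinguishing information is positional rather than temporal.

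The paper resolves this with a structurally different gadget layout. The team has one member per \emph{variable} plus a single ``formula'' member, not one member per clause; the gadget for $x_i$ is the only one that branches at step $i$, so the $i$-th $\E\X$ or $\A\X$ selects the value of $x_i$ through that one member while all others march deterministically. The clause structure appears only after the prefix (an $\A\X$ spreading the formula member over all clauses, then an $\E\X$ choosing a literal per clause), and consistency between the chosen literals and the chosen assignment is enforced at the very end by $\bigwedge_{i=1}^{n}\E\F x_i$: the truth value of $x_i$ is encoded as the \emph{temporal offset} (depth $n+3$ versus $n+4$) at which $x_i$ is labelled along each member's path, so the shared witness index of the synchronous $\F$ forces agreement. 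To repair your proof you would need to import both ideas --- localize each branching decision to a single team member, and re-encode assignments as offsets so that the shared $k$ actually does the coupling work you attribute to it.
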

\begin{proof}
From \citeauthor{st77} \cite{st77} we know that the validity problem of closed quantified Boolean formulas ($\QBFVAL$) of the form $\exists x_{1}\forall x_{2}\cdots\Game x_{n}F$, where $\Game=\exists$ if $n$ is odd, resp., $\Game=\forall$ if $n$ is even, and $F$ is in conjunctive normal form is $\PSPACE$-complete.

 Let $\varphi\dfn\exists x_{1}\forall x_{2}\cdots\Game x_{n}\bigwedge_{i=1}^{m}\bigvee_{j=1}^{3}\ell_{i,j}$ be a closed quantified Boolean formula ($\QBF$) and $\Game=\exists$ if $n$ is odd, resp., $\Game=\forall$ if $n$ is even. Now define the corresponding structure $(W,R,\eta)$ as follows (also see \Cref{fig:generalview-pspace-struc}):
\begin{align*}
 W &\dfn \bigcup_{i=1}^{n}(\{w^{x_{i}}_{j}\mid 1\leq j\leq i\}\cup\{w^{x_{i}}_{j,1},w^{x_{i}}_{j,2}\mid i< j\leq n+4\})\\
 &\hspace{1.7em} \cup\{w^{c}_{i}\mid 1\leq i\leq n+1\}\cup\{w^{c_j}\mid 1\leq j\leq m\}\\
 &\hspace{1.7em} \cup\{w^{c_j}_{j,i,k}\mid 1\leq j\leq m,1\leq i\leq 3,1\leq k\leq 2\},\displaybreak\\
 R &\dfn \bigcup_{i=1}^{n}(\{(w^{x_{i}}_{j},w^{x_{i}}_{j+1})\mid 1\leq j<i\}\\
 &\quad\qquad\cup\{(w^{x_{i}}_{i},w^{x_{i}}_{i+1,1}),(w^{x_{i}}_{i},w^{x_{i}}_{i+1,2})\}\\
 &\quad\qquad \cup \{(w^{x_{i}}_{j,k},w^{x_{i}}_{j+1,k})\mid 1\leq k\leq 2, i<j\leq n+3\})\\
 &\quad\qquad \cup \{(w^{x_{i}}_{n+4,k},w^{x_{i}}_{n+4,k})\mid 1\leq k\leq 2\})\\
 &\hspace{1.7em} \cup \{(w^{c}_{i},w^{c}_{i+1})\mid 1\leq i < n\}\\
 &\hspace{1.7em} \cup\{(w^c_{n+1},w^{c_j})\mid 1\leq j\leq m\}\\
 &\hspace{1.7em} \cup \{(w^{c_j},w^{c_j}_{j,i,1}),(w^{c_j}_{j,i,1},w^{c_j}_{j,i,2})\mid 1\leq\! i\!\leq 3,1\leq\!j\!\leq m\}\\
 &\hspace{1.7em} \cup \{(w^{c_j}_{j,i,2},w^{c_j}_{j,i,2})\mid 1\leq\! i\!\leq 3,1\leq\!j\!\leq m\},\text{ and}\\
 \eta&\dfn \big\{(w^{x_i}_{n+3,1},\{x_i\}\cup\{x_k\mid 1\leq k\neq i\leq n\})\;\big|\; 1\leq i \leq n\big\}\\
 &\hspace{1.7em} \cup \big\{(w^{x_i}_{n+4,2},\{x_i\}\cup\{x_k\mid 1\!\leq\! k\!\neq\! i\!\leq\! n\}))\;\big|\; 1\leq\! i\! \leq\! n\big\}\\
 &\hspace{1.7em} \cup \big\{(w^{c_j}_{j,i,1}, \{x_k\mid \ell_{j,i}=x_k\}\\
 &\qquad\quad\cup\{x_k\mid x_k\neq\var(\ell_{j,i})\})\;\big|\; 1\leq j\leq m,1\leq i\leq 3\big\}\\
 &\hspace{1.7em} \cup \big\{(w^{c_j}_{j,i,2}, \{x_k\mid \ell_{j,i}=\lnot x_k\}\\
 &\qquad\quad\cup\{x_k\mid x_k\neq\var(\ell_{j,i})\})\;\big|\; 1\leq j\leq m,1\leq i\leq 3\big\}.
\end{align*}

Further set 
\begin{align*}
  T&\dfn\mset{w^{x_{1}}_{1},\dots,w^{x_{n}}_{1},w_{1}^{c}} \text{ and }\\
 \varphi&\dfn\underbrace{\EX\AX\cdots\mathsf{P}\X}_n\AX\EX\bigwedge_{i=1}^{n}\EF x_{i},
\end{align*}
where $\mathsf{P}=\E$ if $n$ is odd and $\mathsf{P}=\A$ if $n$ is even. Let the reduction be defined as $f\colon\langle\varphi\rangle\mapsto\langle (W,R,\eta),T,\varphi\rangle$. 

In \Cref{fig:example-pspace} an example of the reduction is shown for the instance $\exists x_{1}\forall x_{2}\exists x_{3}(x_{1}\lor\overline{x_{2}}\lor\overline{x_{3}})\land(\overline{x_{1}}\lor x_{2}\lor x_{3})\land(\overline{x_{1}}\lor\overline{x_{2}}\lor\overline{x_{3}})$. Note that this formula is a valid $\QBF$ and hence belongs to $\QBFVAL$. The left three branching systems choose the values of the $x_{i}$s. A decision for the left/right path can be interpreted as setting variable $x_{i}$ to 1/0.

\begin{figure*}[!b]
\centering
\usetikzlibrary{decorations.pathreplacing}

\begin{tikzpicture}[c/.style={circle,fill=black,inner sep=0mm,minimum width=2mm},x=.85cm,y=1cm,scale=1.1]

\begin{scope}[transform canvas={xshift = -1cm}]
 \foreach \x/\y/\n/\l in 
  {0.25/0/0/$w_{1}^{x_{1}}$,
   -.5/-1/l1/$w_{2,1}^{x_{1}}$,
   1/-1/r1/$w_{2,2}^{x_{1}}$,
   -.5/-2/l2/$w_{3,1}^{x_{1}}$,
   1/-2/r2/$w_{3,2}^{x_{1}}$,
   -.5/-3/l3/$w_{n+1,1}^{x_{1}}$,
   1/-3/r3/$w_{n+1,2}^{x_{1}}$,
   -.5/-4/l4/$w_{n+2,1}^{x_{1}}$,
   1/-4/r4/$w_{n+2,2}^{x_{1}}$,
   -.5/-5/l5/$\substack{x_{1}\\{ \vphantom{\int\limits^x}\smash\vdots}\\x_{n}}$,
   1/-5/r5/$\substack{x_{2}\\{ \vphantom{\int\limits^x}\smash\vdots}\\x_{n}}$,
   -.5/-6/l6/$\substack{x_{2}\\{ \vphantom{\int\limits^x}\smash\vdots}\\x_{n}}$,
   1/-6/r6/$\substack{x_{1}\\{ \vphantom{\int\limits^x}\smash\vdots}\\x_{n}}$}
  \node[c,label={[xshift=1.3mm]180:\l}] (a\n) at (\x,\y) {};

 \foreach \f/\t in {0/l1,0/r1,l1/l2,l3/l4,l4/l5,l5/l6,r1/r2,r3/r4,r4/r5,r5/r6}{
  \path[-stealth',black] (a\f) edge (a\t);
 }

 \path[-stealth',dotted,black] (ar2) edge (ar3);
 \path[-stealth',dotted,black] (al2) edge (al3);

 \foreach \x/\y/\n/\l in 
  {3.25/0/0/$w_{1}^{x_{2}}$,
   3.25/-1/l1/$w_{2}^{x_{2}}$,
   2.5/-2/l2/$w_{3,1}^{x_{2}}$,
   4/-2/r2/$w_{3,2}^{x_{2}}$,
   2.5/-3/l3/$w_{n+1,1}^{x_{2}}$,
   4/-3/r3/$w_{n+1,2}^{x_{2}}$,
   2.5/-4/l4/$w_{n+2,1}^{x_{2}}$,
   4/-4/r4/$w_{n+2,2}^{x_{2}}$,
   2.5/-5/l5/$\substack{x_{1}\\{ \vphantom{\int\limits^x}\smash\vdots}\\x_{n}}$,
   4/-5/r5/$\substack{x_{1}\\x_{3}\\{ \vphantom{\int\limits^x}\smash\vdots}\\x_{n}}$,
   2.5/-6/l6/$\substack{x_{1}\\x_{3}\\{ \vphantom{\int\limits^x}\smash\vdots}\\x_{n}}$,
   4/-6/r6/$\substack{x_{1}\\{ \vphantom{\int\limits^x}\smash\vdots}\\x_{n}}$}
  \node[c,label={[xshift=1.3mm]180:\l}] (b\n) at (\x,\y) {};

 \foreach \f/\t in {0/l1,l1/l2,l1/r2,l3/l4,l4/l5,l5/l6,r3/r4,r4/r5,r5/r6}{
  \path[-stealth',black] (b\f) edge (b\t);
 }

 \path[-stealth',dotted,black] (br2) edge (br3);
 \path[-stealth',dotted,black] (bl2) edge (bl3);
 
 \foreach \x in {al6,ar6,bl6,br6}{ 
  \path[-stealth',black] (\x) edge[>=stealth',loop below,] (\x);
 } 
\end{scope}

 \node at (4,-1) {$\cdots$};

 \foreach \x/\y/\n/\l in 
  {5.75/0/0/$w_{1}^{x_{n}}$,
  5.75/-1/l1/$w_{2}^{x_{n}}$,
  5.75/-2/l2/$w_{3}^{x_{n}}$,
  5/-3/l3/$w_{n+1,1}^{x_{n}}$,
  6.5/-3/r3/$w_{n+1,2}^{x_{n}}$,
  5/-4/l4/$w_{n+2,1}^{x_{n}}$,
  6.5/-4/r4/$w_{n+2,2}^{x_{n}}$,
  5/-5/l5/$\substack{x_{1}\\{ \vphantom{\int\limits^x}\smash\vdots}\\x_{n}}$,
  6.5/-5/r5/$\substack{x_{1}\\{ \vphantom{\int\limits^x}\smash\vdots}\\x_{n\!-\!1}}$,
  5/-6/l6/$\substack{x_{1}\\{ \vphantom{\int\limits^x}\smash\vdots}\\x_{n\!-\!1}}$,
  6.5/-6/r6/$\substack{x_{1}\\{ \vphantom{\int\limits^x}\smash\vdots}\\x_{n}}$}
  \node[c,label={[xshift=1.3mm]180:\l}] (c\n) at (\x,\y) {};

 \foreach \f/\t in {0/l1,l1/l2,l3/l4,l4/l5,l5/l6,r3/r4,r4/r5,r5/r6}{
  \path[-stealth',black] (c\f) edge (c\t);
 }

 \path[-stealth',dotted,black] (cl2) edge (cr3);
 \path[-stealth',dotted,black] (cl2) edge (cl3);

\draw [decorate,decoration={brace,amplitude=10pt,mirror,raise=4pt},yshift=0pt]
(6.7,-3.2) -- (6.7,0.2) node [black,midway,xshift=0.8cm,text width=2cm,anchor=west] {quantification\newline of variables};

 \foreach \x/\y/\n/\l in 
 {12/0/0/$w_{1}^{c}\;\;$,
 12/-1/1/$w_{2}^{c}\;\;$,
 12/-2/2/$w_{3}^{c}\;\;$,
 12/-3/3/$w_{n+1}^{c}\;\;$,
 9.5/-4/l4/$w^{c_{1}}\;$,
 14.5/-4/r4/$w^{c_{m}}\;\;$,
 8/-5/ll5/$w_{1,1,1}^{c_{1}}$,
 9.5/-5/lm5/$w_{1,2,1}^{c_{1}}$,
 11/-5/lr5/$w_{1,3,1}^{c_{1}}$,
 8/-6/ll6/$w_{1,1,2}^{c_{1}}$,
 9.5/-6/lm6/$w_{1,2,2}^{c_{1}}$,
 11/-6/lr6/$w_{1,3,2}^{c_{1}}$,
 13/-5/rl5/$w_{m,1,1}^{c_{m}}$,
 14.5/-5/rm5/$w_{m,2,1}^{c_{m}}$,
 16/-5/rr5/$w_{m,3,1}^{c_{m}}$,
 13/-6/rl6/$w_{m,1,2}^{c_{m}}$,
 14.5/-6/rm6/$w_{m,2,2}^{c_{m}}$,
 16/-6/rr6/$w_{m,3,2}^{c_{m}}$}
  \node[c,label={[xshift=1.3mm]180:\l}] (d\n) at (\x,\y) {};

 \foreach \f/\t in {0/1,1/2,3/l4,3/r4,
 	l4/lm5,ll5/ll6,lm5/lm6,lr5/lr6,%
	r4/rm5,rl5/rl6,rm5/rm6,rr5/rr6}{
  \path[-stealth',black] (d\f) edge (d\t);
 }
 
 \path[-stealth',dotted,black] (d2) edge (d3);

 \foreach \x in {cl6,cr6,dll6,dlm6,dlr6,drl6,drm6,drr6}{ 
  \path[-stealth',black] (\x) edge[>=stealth',loop below,] (\x);
 }
 
 \foreach \f/\t in {l4/ll5,r4/rl5}{
  \path[-stealth',black] (d\f) edge[bend right] (d\t);
 }
 
 \foreach \f/\t in {l4/lr5,r4/rr5}{
  \path[-stealth',black] (d\f) edge[bend left] (d\t);
 }

 \node at (12,-4) {$\cdots$};
 
 \draw[black,dashed,rounded corners] (-2.5,-.3) rectangle (12.3,.3);
 \node[anchor=west] at (12.5,0) {Team $T$};

\end{tikzpicture}

\caption{General view on the created Kripke structure in the proof of \Cref{lem:mcs-pspace}.}
\label{fig:generalview-pspace-struc}
\end{figure*}
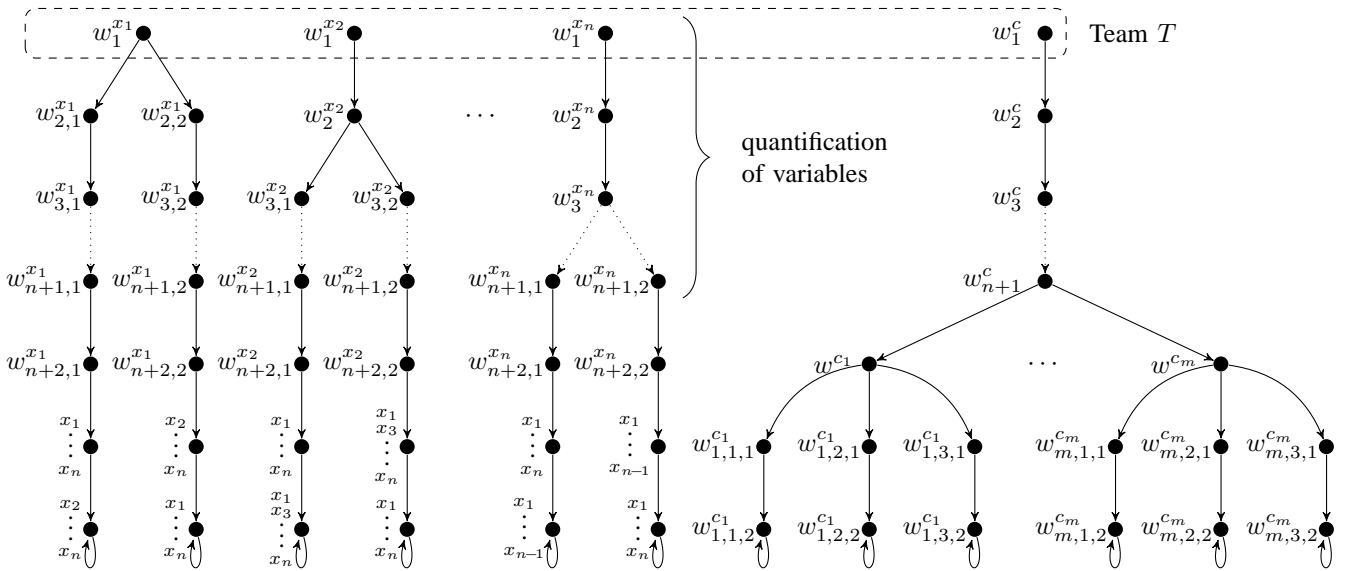

For the correctness of the reduction we need to show that $\varphi\in\QBFVAL$ iff $f(\varphi)\in\sMC$.

``$\Rightarrow$'': Let $\varphi\in\QBFVAL$, $\varphi=\exists x_{1}\forall x_{2}\cdots\Game x_{n}F$, $F=\bigwedge_{i=1}^{m}\bigvee_{j=1}^{3}\ell_{i,j}$, and let $S$ be a valid set of assignments with respect to $\exists x_{1}\forall x_{2}\cdots\Game x_{n}$. Now it holds that for every $s\in S$ that $s\models F$ holds. Choose an arbitrary such $s\in S$. Note that the variables now can be seen as being existentially quantified with respect to every assignment in $S$ (whereas strictly speaking some of them stem from a universal quantifier $\forall$, yet at the moment we consider only a single assignment). Denote with $f(\varphi)=\langle(W,R,\eta),T,\varphi\rangle$ the value of the reduction function and denote with $K$ the structure $(W,R,\eta)$. 

Now we will prove that $K,T\smodels\varphi$. Observe that $T=\mset{w_{1}^{x_{1}},\dots,w_{1}^{x_{n}},w_{1}^{c}}$ by definition. For $w_{1}^{c}$ there is no choice in the next $n$ steps defined by the prefix of $\varphi$. For $w_{1}^{x_{1}},\dots,w_{1}^{x_{n}}$ we decide as follows depending on the assignment $s$. 

Note that during the evaluation of $\varphi$ w.r.t.\ $T$ and $K$ in the first $n$ $\CTL$ operators of $\varphi$ the $\AX$ operators are treated in the proof now as $\EX$. This is because here we just have to decide with respect to the chosen assignment from $S$. Hence if $s(x_{i})=1$ then choose in step $i$ of this prefix from $w^{x_{i}}_{i}$ the successor world $w^{x_{i}}_{i+1,1}$. If $s(x_{i})=0$ then choose $w^{x_{i}}_{i+1,2}$ instead. 

Now after $n$ steps the current team $T'$ then is $\mset{w^{c}_{n+1}}\multicup\mset{w^{x_{i}}_{n+1,1}\mid s(x_{i})=1,1\leq i\leq n}\multicup\mset{w^{x_{i}}_{n+1,2}\mid s(x_{i})=0,1\leq i\leq n}$ (note that now the team completely agrees with the assignment $s$). In the next step the team branches now on all clauses of $F$ and becomes $\mset{w^{c_{j}}\mid 1\leq j\leq m}\multicup\mset{w^{x_{i}}_{n+2,1}\mid s(x_{i})=1,1\leq i\leq n}\multicup\mset{w^{x_{i}}_{n+2,2}\mid s(x_{i})=0,1\leq i\leq n}$. Now continuing with an $\EX$ in $\varphi$ the team members of the ``formula'' (we here refer to the elements $\mset{w^{c_{j}}\mid 1\leq j\leq m}$ of the team) have to decide for a literal which satisfies the respective clause. As $s\models F$ this must be possible. W.l.o.g.\ assume that in clause $C_{j}$ the literal $\ell_{j}$ satisfies $C_{j}$ by $s(\ell_{j})=1$ for $1\leq j\leq m$ (denote with $s(\ell)$ the value $1-s(x)$ if $x$ is the corresponding variable to literal $\ell$). Let $\textrm{index}(\ell_{j})\in\{1,2,3\}$ denote the ``index'' of $\ell_{j}$ in $C_{j}$, i.e., the value $i\in\{1,2,3\}$ such that $\ell_{j}=\ell_{i,j}$ in $F$. Then we choose the world $w^{c_{j}}_{j,\textrm{index}(\ell_{j}),1}$ as a successor from $w^{c_{j}}$ for $1\leq j\leq m$. 

For the (``variable'' team members) $w^{x_{i}}_{n+2,k}$ with $k\in\{1,2\}$ we have no choice and proceed to $w^{x_{i}}_{n+3,k}$. Now we have to satisfy the remainder of $\varphi$ which is $\bigwedge_{i=1}^{n}\EF x_{i}$. Observe that for variable team members $w^{x_{i}}_{n+3,1}$ only has $x_{i}$ labeled in the current world and not in the successor world $w^{x_{i}}_{n+4,1}$, i.e., $x_{i}\notin \eta(w^{x_{i}}_{n+4,1})$. 

Symmetrically this is true for the $w^{x_{i}}_{n+3,2}$ worlds but $x_{i}\notin \eta(w^{x_{i}}_{n+3,2})$ and $x_{i}\in \eta(w^{x_{i}}_{n+4,2})$. Hence ``staying'' in the world (hence immediately satisfying the $\EF x_{i}$) means setting $x_{i}$ to true by $s$ whereas making a further step means setting $x_{i}$ to false by $s$. 

Further observe for the formula team members we have depending on the value of $s(\ell_{j})$ that $x\in \eta(w^{c_{j}}_{n+3,\textrm{index}(\ell_{j}),1})$ and $x\notin \eta(w^{c_{j}}_{n+3,\textrm{index}(\ell_{j}),2})$ if $s(\ell_{j})=1$, and $x\notin \eta(w^{c_{j}}_{n+3,\textrm{index}(\ell_{j}),1})$ and $x\in \eta(w^{c_{j}}_{n+3,\textrm{index}(\ell_{j}),2})$ if $s(\ell_{j})=0$. Thus according to synchronous semantics the step depth w.r.t.\ a $x_{i}$ have to be the same for every element of the team. Hence if we decided for the variable team member that $s(x_{i})=1$ then for the formula team members we cannot make a step to the successor world and therefore have to stay (similarly if $s(x_{i})=0$ then we have to do this step). 

Note that this is not relevant for other states as there all variables are labelled as propositions and are trivially satisfied everywhere. Hence as $\ell_{j}\models C_{j}$ we have decided for the world $w^{x_{i}}_{n+3,2-\textrm{index}(\ell_{j})}$ and can do a step if $s(\ell_{j})=0$ and stay if $s(\ell_{j})=1$. Hence $K,T\smodels\varphi$.

For the direction ``$\Leftarrow$'' observe that with similar arguments we can deduce from the ``final'' team in the end what has to be a satisfying assignment depending on the choices of $w^{x_{i}}_{n+3,k}$ and $k\in\{1,2\}$. Hence by construction any of these assignments satisfies $F$. Let again denote by $S$ a set of teams which satisfy $\AX\EX\bigwedge_{i=1}^{n} x_{i}$ according to the prefix of $n$ $\CTL$ operators. Then define a set $S'$ of assignments from $S$ by getting the assignment $s$ from the team $t\in S$ by setting $s(x_{i})=1$ if there is a world $w^{x_{i}}_{n+1,1}$ in $t$ and otherwise $s(x_{i})=0$. Then it analogously follows that $s\models F$. $S'$ also agrees on the quantifier prefix of $\varphi$. Hence $\varphi\in\QBFVAL$.
\end{proof}

\begin{figure}\centering
\begin{tikzpicture}[c/.style={circle,fill=black,inner sep=0mm,minimum width=1.5mm},x=.48cm,y=.8cm]

 \foreach \x/\y/\n/\l in {0/0/0/,-.5/-1/l1/,.5/-1/r1/,-.5/-2/l2/,.5/-2/r2/,-.5/-3/l3/,.5/-3/r3/,-.5/-4/l4/,.5/-4/r4/,%
 	-.5/-5/l5/$\substack{x_{1}\\x_{2}\\x_{3}}$,.5/-5/r5/$\substack{x_{2}\\x_{3}}$,%
	-.5/-6/l6/$\substack{x_{2}\\x_{3}}$,.5/-6/r6/$\substack{x_{1}\\x_{2}\\x_{3}}$}
  \node[c,label={[xshift=1.3mm]180:\footnotesize\l}] (a\n) at (\x,\y) {};

 \foreach \x/\y/\n/\l in {2.5/0/0/,2.5/-1/l1/,2/-2/l2/,3/-2/r2/,2/-3/l3/,3/-3/r3/,2/-4/l4/,3/-4/r4/,%
 	2/-5/l5/$\substack{x_{1}\\x_{2}\\x_{3}}$,3/-5/r5/$\substack{x_{1}\\x_{3}}$,%
	2/-6/l6/$\substack{x_{1}\\x_{3}}$,3/-6/r6/$\substack{x_{1}\\x_{2}\\x_{3}}$}
  \node[c,label={[xshift=1.3mm]180:\footnotesize\l}] (b\n) at (\x,\y) {};

 \foreach \x/\y/\n/\l in {5/0/0/,5/-1/l1/,5/-2/l2/,4.5/-3/l3/,5.5/-3/r3/,4.5/-4/l4/,5.5/-4/r4/,%
 	4.5/-5/l5/$\substack{x_{1}\\x_{2}\\x_{3}}$,5.5/-5/r5/$\substack{x_{1}\\x_{2}}$,%
	4.5/-6/l6/$\substack{x_{1}\\x_{2}}$,5.5/-6/r6/$\substack{x_{1}\\x_{2}\\x_{3}}$}
  \node[c,label={[xshift=1.3mm]180:\footnotesize\l}] (c\n) at (\x,\y) {};

 \foreach \f/\t in {0/l1,0/r1,l1/l2,l2/l3,l3/l4,l4/l5,l5/l6,r1/r2,r2/r3,r3/r4,r4/r5,r5/r6}{
  \path[-stealth',black] (a\f) edge (a\t);
 }
 \foreach \f/\t in {0/l1,l1/l2,l1/r2,l2/l3,l3/l4,l4/l5,l5/l6,r2/r3,r3/r4,r4/r5,r5/r6}{
  \path[-stealth',black] (b\f) edge (b\t);
 }
 \foreach \f/\t in {0/l1,l1/l2,l2/l3,l2/r3,l3/l4,l4/l5,l5/l6,r3/r4,r4/r5,r5/r6}{
  \path[-stealth',black] (c\f) edge (c\t);
 }

 \foreach \x/\y/\n/\l in 
 {12/0/0/,
 12/-1/1/,
 12/-2/2/,
 12/-3/3/,
 8.5/-4/l4/,
 12/-4/m4/,
 15.5/-4/r4/,%
 7.5/-5/ll5/$\substack{x_{1}\\x_{2}\\x_{3}}$,
 8.5/-5/lm5/$\substack{x_{1}\\x_{3}}$,
 9.5/-5/lr5/$\substack{x_{1}\\x_{2}}$,
 7.5/-6/ll6/$\substack{x_{2}\\x_{3}}$,
 8.5/-6/lm6/$\substack{x_{1}\\x_{2}\\x_{3}}$,
 9.5/-6/lr6/$\substack{x_{1}\\x_{2}\\x_{3}}$,
 11/-5/ml5/$\substack{x_{2}\\x_{3}}$,
 12/-5/mm5/$\substack{x_{1}\\x_{2}\\x_{3}}$,
 13/-5/mr5/$\substack{x_{1}\\x_{2}\\x_{3}}$,
 11/-6/ml6/$\substack{x_{1}\\x_{2}\\x_{3}}$,
 12/-6/mm6/$\substack{x_{1}\\x_{3}}$,
 13/-6/mr6/$\substack{x_{1}\\x_{2}}$,
 14.5/-5/rl5/$\substack{x_{2}\\x_{3}}$,
 15.5/-5/rm5/$\substack{x_{1}\\x_{3}}$,
 16.5/-5/rr5/$\substack{x_{1}\\x_{2}}$,
 14.5/-6/rl6/$\substack{x_{1}\\x_{2}\\x_{3}}$,
 15.5/-6/rm6/$\substack{x_{1}\\x_{2}\\x_{3}}$,
 16.5/-6/rr6/$\substack{x_{1}\\x_{2}\\x_{3}}$}
  \node[c,label={[xshift=1.3mm]180:\footnotesize\l}] (d\n) at (\x,\y) {};

 \foreach \f/\t in {0/1,1/2,2/3,3/l4,3/m4,3/r4,%
 	l4/lm5,ll5/ll6,lm5/lm6,lr5/lr6,%
	m4/mm5,ml5/ml6,mm5/mm6,mr5/mr6,%
	r4/rm5,rl5/rl6,rm5/rm6,rr5/rr6}{
  \path[-stealth',black] (d\f) edge (d\t);
 }
 
 \foreach \x in {al6,ar6,bl6,br6,cl6,cr6,dll6,dlm6,dlr6,dml6,dmm6,dmr6,drl6,drm6,drr6}{ 
  \path[-stealth',black] (\x) edge[>=stealth',loop below,] (\x);
 }
 
 \foreach \f/\t in {l4/ll5,m4/ml5,r4/rl5}{
  \path[-stealth',black] (d\f) edge[bend right] (d\t);
 }
 
 \foreach \f/\t in {l4/lr5,m4/mr5,r4/rr5}{
  \path[-stealth',black] (d\f) edge[bend left] (d\t);
 }
 
 \draw[black,dashed,rounded corners] (-.3,-.3) rectangle (12.3,.3);
 \node[anchor=west] at (12.5,0) {Team $T$};
 
 \node[text width=2cm,align=center] at (8,-3) {\footnotesize agreed assignment};
 \draw[black,dashed] (-1,-3) -- (7.5,-3);
 
\end{tikzpicture}
 \caption{Example structure built in proof of Lemma~\ref{lem:mcs-pspace}.}\label{fig:example-pspace}
\end{figure}
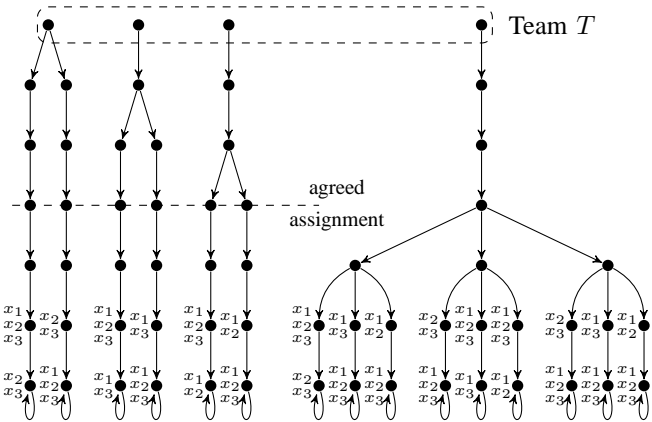

\begin{theorem}
 $\sMC$ is in $\PSPACE$.
\end{theorem}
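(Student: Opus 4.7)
The plan is to design a recursive model-checking algorithm $\mathsf{check}(K,T,\varphi)$ that decides $K,T\smodels\varphi$ in polynomial space by induction on the structure of $\varphi$. The key structural observation is that under synchronous semantics the size of the team never grows: the clauses for $\land$, $\lor$, $\X$, $\U$, and $\W$ all map $T$ to a team (or a pair of subteams, in the $\lor$ case) of size at most $|T|$. Consequently every intermediate team is representable in $O(|T|\log|W|)$ bits and the recursion depth is $O(|\varphi|)$, so the total stack space used by the recursion is polynomial.

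For literals and $\land$ the routine is immediate. For $\lor$ one enumerates the partitions $T=T_1\sqcup T_2$ one at a time (each representable in polynomial space) and recurses on both halves. For $\E\X\psi$ (respectively $\A\X\psi$) one enumerates the successor maps $\sigma\colon T\to W$ with $(t,\sigma(t))\in R$ and existentially (respectively universally) recurses on the image team. The interesting cases are the synchronous until and weak until.

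For $\E[\varphi\U\psi]$ one views the evaluation as a reachability question in the \emph{team-configuration graph} whose nodes are teams of size at most $|T|$ and whose edges encode one-step successor maps. A witnessing path has the form $T=T_0\to T_1\to\cdots\to T_k$ with $K,T_i\smodels\varphi$ for $i<k$ and $K,T_k\smodels\psi$. Since there are at most $|W|^{|T|}$ distinct teams, any witness can be truncated to one of length at most $|W|^{|T|}$, and the binary counter tracking the step index fits in polynomial space. The reachability is thus decidable by a nondeterministic polynomial-space procedure that guesses $T_{i+1}$ from $T_i$ step by step and invokes $\mathsf{check}$ recursively to verify the intermediate satisfaction conditions; by Savitch's theorem this stays in $\PSPACE$. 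The case $\A[\varphi\U\psi]$ is handled by deciding the complement: nondeterministically produce a path tuple together with either (a)~a cycle in the team graph on which $\psi$ never holds, or (b)~a first step at which $\varphi$ fails while $\psi$ has not yet held. Both certificates are polynomially bounded, and $\PSPACE$ is closed under complementation. The weak-until operators $\E[\varphi\W\psi]$ and $\A[\varphi\W\psi]$ are handled by the same scheme, additionally accepting path tuples along which $\varphi$ holds indefinitely.

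The main obstacle I anticipate is the bookkeeping needed to verify that the interleaving of recursive subformula calls with the nondeterministic reachability searches really stays within polynomial space simultaneously. This reduces to the standard observation that a $\PSPACE$ subroutine can reuse the space of its caller; since each level of the recursion only stores the current team and a polynomial-bit counter, the total space usage is $O(|\varphi|\cdot|T|\log|W|)$, yielding the claimed $\PSPACE$ upper bound.
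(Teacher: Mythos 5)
Your proposal is correct and follows essentially the same route as the paper's proof: recursion on the structure of $\varphi$ exploiting that the team size never grows, guessing successor teams step by step, and truncating the until-witness to length at most $|W|^{|T|}$ by excising a repeated team configuration from the tuple of paths. The only difference is presentational --- the paper packages the universal path quantification of $\A[\varphi\U\psi]$ directly as universal branching in an alternating polynomial-time algorithm, whereas you handle it by complementation plus closure of $\PSPACE$ under complement and Savitch's theorem, which amounts to the same space bound.
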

\begin{proof}
 The following $\PSPACE$-algorithm solves $\sMC$. The weak until cases are omitted as they can be defined analogously to the usual until cases and just use non-determinism to operate on the disjunction.
 
\SetKw{proce}{Procedure}
\SetKwFunction{smc}{s-check}
\SetKwFunction{succ}{succ}

\LinesNumbered
\begin{algorithm}\caption{Model checking algorithm for $\sMC$}\label{alg:TeamMC}
\proce \succ{Structure $K=(W,R,\eta)$, team $T$}\;
 guess a multiset $T'$ with $|T|=|T'|$ s.t.~f.a.~$t\in T$ there exists a $u\in T'$ with $tRu$ and vice versa\;
 \Return $T'$\;
 \BlankLine
\proce \smc{Kripke structure $K=(W,R,\eta)$, team $T$, formula $\varphi$}\;
 \lIf{$\varphi=\true$}{\Return 1}
 \lIf{$\varphi=\false$}{\Return $T=\emptyset$?}
 \lIf{$\varphi=p$}{\Return $\forall w \in T:p\in \eta(w)$?}
 \lIf{$\varphi=\lnot p$}{\Return $\forall w \in T:p\notin \eta(w)$?}
 \If{$\varphi=\alpha\land\beta$}{\Return \smc{$K,T,\alpha$}$\land$\smc{$K,T,\beta$}}
 \If{$\varphi=\alpha\lor\beta$}{
  guess $T_1\sqcup T_2=T$\;
  \Return \smc{$K,T_1,\alpha$}$\land$\smc{$K,T_2,\beta$}\;
 }
 \If{$\varphi=\mathsf{EX}\alpha$}{
  $T'\leftarrow$\succ{$K,T$}\;
  \Return \smc{$K,T',\alpha$}\;
 }
 \If{$\varphi=\mathsf{AX}\alpha$}{
  bool $v\leftarrow1$\;
  \For{every possible guess $T'\leftarrow$\succ{$K,T$}}{
   $v\leftarrow v\land$\smc{$K,T',\alpha$}\;
  }
  \Return $v$\;
 }
 \If{$\varphi=\mathsf{E}[\alpha\U\beta]$}{
  guess a binary number $k\in [0,|W|^{|T|}]$, $v\leftarrow1$, and $T_{\text{last}}\leftarrow T$\;
  \For{$1\leq i \leq k$}{
   $T'\leftarrow$\succ{$K,T_{{\text{last}}}$}\;  
   $v\leftarrow v\land$\smc{$K,T',\alpha$}\;
   $T_{\text{last}}\leftarrow T'$ and $i\leftarrow i+1$\;
   \lIf{$i=k$}{$T_{\text{last}}\leftarrow$\succ{$K,T_{{\text{last}}}$}}
  }
  \Return $v\land$\smc{$K,T_{\text{last}},\beta$}\;
 }
 \If{$\varphi=\mathsf{A}[\alpha\U\beta]$}{
  let $T_{\text{last}}\leftarrow T$, and bool $v\leftarrow1$\;
  \For{$1\leq i \leq |W|^{|T|}$}{
   \For{every possible guess $T'\leftarrow$\succ{$K,T$}}{
    $T_{\text{last}}\leftarrow T'$, and guess $A\in\{0,1\}$\;
    \leIf{A}{$v\leftarrow v\land$\smc{$K,T',\alpha$}\;}{\Return $v\land$\smc{$K,T',\beta$}}
   }
   $i\leftarrow i+1$\;
  }
  \Return $v\land$\smc{$K,T',\beta$}\;
 }

\end{algorithm}

The procedure \texttt{s-check} (see Algorithm \ref{alg:TeamMC}) computes for a given Kripke structure $K$, a team $T$ and a formula $\varphi$ if $K,T\smodels\varphi$.

The correctness of the algorithm can be verified by induction over the formula $\varphi$
as the different cases in the procedure \texttt{s-check} merely restate the semantical definition of our team logic.

For the case $\varphi=\mathsf{E}[\alpha\U\beta]$ by definition we need to check if there exists paths
$\pi_{t_1}\in\Pi(t_1),\dots,\pi_{t_n}\in\Pi(t_n)$
and a $k\in\N$ such that
\begin{align*}
& K,\multibigcup_{1\leq j\leq n}\mset{\pi_{t_j}(k)}\smodels\beta\text{ and }\\
& \forall 1\leq i<k: K,\multibigcup_{1\leq j\leq n}\mset{\pi_{t_j}(i)}\smodels\alpha.
\end{align*}
The algorithm checks exactly the same conditions, but guesses the number $k$ only up to $|W|^{|T|}$. We show this is sufficient as the size $|T|$ of the team does not increase in the process of evaluation. Suppose such a $k$ exists but $k>|W|^{|T|}$, then there are $i_1<i_2$ such that all paths have a loop from $i_1$ to $i_2$, i.e.,
$$\forall{1\leq j\leq n}: \pi_{t_j}(i_1)=\pi_{t_j}(i_2).$$ 
We can generate a new set of paths
$\pi'_{t_1}\in\Pi(t_1),\dots,\pi'_{t_n}\in\Pi(t_n)$ by removing the loop from $i_1$ to $i_2$
and let $k'=k-i_2+i_1$. Then these paths and the new constant $k'$ also satisfy the conditions above.
We can repeat this process until we gained a constant less then $|W|^{|T|}$. Hence if there is such a $k$ we can find a $k\leq|W|^{|T|}$.

Similar it suffices in the case $\varphi=\mathsf{A}[\alpha\U\beta]$ to verify that $\beta$ is satisfied after at most $|W|^{|T|}$ steps.

Also our algorithm runs in alternating polynomial time; the nondeterministic choices occur in the Until-case and in the procedure \texttt{succ}, where they correspond to existential and universal quantifications. Hence the algorithm runs in $\PSPACE$.
\end{proof}

\begin{corollary}
 $\sMC$ is $\PSPACE$-complete.
\end{corollary}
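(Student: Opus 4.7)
The statement is an immediate corollary of the two preceding theorems, so the plan is essentially to note that the completeness follows by combining them. First I would observe that $\PSPACE$-hardness of $\sMC$ was already established via the reduction from $\QBFVAL$ in the proof of the $\PSPACE$-hardness theorem. Then I would invoke the containment $\sMC \in \PSPACE$ witnessed by the \texttt{s-check} algorithm, whose correctness relies on the observation that the team size is non-increasing along the evaluation and thus that the relevant until-witness $k$ can be bounded by $|W|^{|T|}$, yielding an alternating polynomial-time (hence polynomial-space) procedure.

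Concretely, the proof is a one-liner: combine the $\PSPACE$-hardness theorem with the membership theorem to conclude $\PSPACE$-completeness. There is no further obstacle — the only subtle points have already been handled inside the two cited results (the gadget construction enforcing synchronicity across clause- and variable-branches on the one hand, and the loop-elimination argument bounding the search depth for the until operators on the other). Hence I do not anticipate any hard step; the corollary is purely a restatement.
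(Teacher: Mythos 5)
Your proposal is correct and matches the paper exactly: the corollary is stated without its own proof precisely because it follows immediately by combining the preceding $\PSPACE$-hardness theorem (via the $\QBFVAL$ reduction) with the $\PSPACE$-membership theorem (via the alternating polynomial-time \texttt{s-check} algorithm). Nothing further is needed.
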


\subsection{Satisfiability}
The following proposition summarises what is known about usual $\CTL$ satisfiability.

\begin{proposition}[\cite{fila79,pr80}]\label{prop:CTLSAT-complexity}
 Satisfiability for $\CTL$ formulas is $\EXPTIME$-complete.
\end{proposition}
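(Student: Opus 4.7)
The plan is to establish $\EXPTIME$-completeness by proving matching upper and lower bounds, since the statement is due to Fischer--Ladner and Pratt and their arguments adapt cleanly to the negation-normal-form presentation used here.

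For the $\EXPTIME$ upper bound, I would follow the classical Hintikka-structure / tableau approach. Given a $\CTL$-formula $\varphi$, form its Fischer--Ladner closure $\mathrm{cl}(\varphi)$, whose cardinality is linear in $|\varphi|$. Call a subset $A \subseteq \mathrm{cl}(\varphi)$ an \emph{atom} if it is maximal and locally consistent: it contains exactly one of $\psi,\lnot\psi$ for each $\psi\in \mathrm{cl}(\varphi)$, it respects $\land$ and $\lor$, and it correctly unfolds every $\U$- and $\W$-formula via the usual fixpoint identities. Build the directed graph $G_\varphi$ whose vertices are the atoms and whose edges encode the successor requirements forced by $\E\X$- and $\A\X$-subformulas. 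Satisfiability of $\varphi$ is then equivalent to the existence of a sub-structure of $G_\varphi$ that contains $\varphi$ in some atom and in which every eventuality $\E[\alpha\U\beta]$ or $\A[\alpha\U\beta]$ is eventually discharged along the appropriate paths. That existence test is a standard nested fixpoint over $G_\varphi$ and runs in polynomial time in $|G_\varphi|$; since $|G_\varphi|=2^{O(|\varphi|)}$, the whole procedure runs in $\EXPTIME$.

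For the $\EXPTIME$ lower bound, I would reduce from the word problem for alternating polynomial-space Turing machines, which is $\EXPTIME$-complete by Chandra--Kozen--Stockmeyer. For an alternating $\mathsf{PSPACE}$-machine $M$ and input $x$, encode each configuration of $M$ on $x$ as a world labelled with $O(p(|x|))$ proposition symbols recording the tape contents, head position and current state; existential states of $M$ correspond to $\E\X$-transitions and universal states to $\A\X$-transitions. Build a $\CTL$-formula $\varphi_{M,x}$ of the shape
\[
\mathrm{init}\;\land\;\A\G\bigl(\mathrm{well\text{-}formed}\land\mathrm{transition}\bigr)\;\land\;\A\F\,\mathrm{accept},
\]
asserting that the root world encodes the initial configuration, every reachable world respects a unique tape/head/state encoding, each successor's labelling is a legal $M$-successor of the current configuration, and an accepting configuration is eventually reached on all paths. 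The formula has polynomial size in $|M|+|x|$ and is satisfiable iff $M$ accepts $x$.

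The main obstacle, and the technical heart of Pratt's reduction, is the \emph{transition} conjunct: with only single-step modalities, one must force each of the polynomially many tape cells of a successor world to be equal to the same cell of the current world, except at the head position where the local rule of $M$ applies. My plan would be to introduce a cell-index proposition together with a universal $\A\G$-guard that cell by cell compares the current and next labelling. Closing this gap carefully, while keeping the formula of polynomial size and explicitly in the negation-normal form adopted in this paper, is the step that would require the most attention.
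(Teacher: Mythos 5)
The paper states this proposition as a cited classical result of Fischer--Ladner and Pratt and supplies no proof of its own, so there is nothing internal to compare against; your sketch reproduces the standard argument from those sources (Fischer--Ladner closure and atom elimination for the $\EXPTIME$ upper bound, simulation of an alternating polynomial-space machine for the matching lower bound) and is correct in outline. The one step you flag as delicate --- the frame/transition conjunct --- closes routinely with polynomially many constant-size implications of the form $(c_{i,a}\land\lnot h_i)\rightarrow \A\X\, c_{i,a}$, which pose no difficulty in negation normal form since the antecedent is a conjunction of literals.
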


For the team based variants of computation tree logic the computational complexity of the satisfiability problem is proven to be the same as for $\CTL$.

\begin{theorem}\label{thm:SAT}
 $\sSAT$ and $\aSAT$ are $\EXPTIME$-complete.
\end{theorem}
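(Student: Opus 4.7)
My plan is to show that both team satisfiability problems coincide with the classical $\CTL$ satisfiability problem, and then invoke \Cref{prop:CTLSAT-complexity}.

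For the lower bound, I would reduce classical $\CTL$ satisfiability to $\vdash$-satisfiability (for $\vdash \in \{\smodels,\amodels\}$) by the identity reduction. If $K,w\models\varphi$ for some Kripke structure $K$ and world $w$, then by \Cref{lem:singleton} we immediately get $K,\mset{w}\vdash\varphi$, and $\mset{w}$ is a non-empty team, so $\varphi$ is team-satisfiable.

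For the upper bound, the key observation is that non-emptiness of the team combined with downward closure forces the existence of a classical pointed model. Suppose $\varphi$ is team-satisfiable, witnessed by $K$ and a non-empty team $T$ with $K,T\vdash\varphi$. Pick any $w\in T$; then $\mset{w}\multisubseteq T$, so by \Cref{lem:dwclos} (downward closure of both $\amodels$ and $\smodels$) we have $K,\mset{w}\vdash\varphi$, and by \Cref{lem:singleton} this yields $K,w\models\varphi$ in the classical sense. Conversely, classical satisfiability trivially implies team satisfiability as noted above. Hence $\sSAT$ and $\aSAT$ are logspace equivalent to classical $\CTL$ satisfiability and the $\EXPTIME$ upper bound follows from \Cref{prop:CTLSAT-complexity}.

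I do not expect any significant obstacle: the combination of singleton equivalence and downward closure straightforwardly aligns team satisfiability with ordinary $\CTL$ satisfiability. The only subtle point is the insistence on a non-empty team in the definition of $\aSAT$ and $\sSAT$—without this requirement every formula would be trivially team-satisfiable by the empty team property, and the reduction to classical $\CTL$ satisfiability would break. Because our definitions require $T\neq\emptyset$, the argument above is clean and the equivalence with classical satisfiability is exact.
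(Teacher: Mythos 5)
Your proposal is correct and follows essentially the same route as the paper: restrict to singleton teams via downward closure (\Cref{lem:dwclos}), transfer to classical pointed semantics via \Cref{lem:singleton}, and conclude with \Cref{prop:CTLSAT-complexity}. You merely spell out both directions of the identity reduction more explicitly than the paper does.
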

\begin{proof}
 In both cases the problem merely asks whether there exists a Kripke structure $K$ and a non-empty team $T$ of $K$ such that $K,T\amodels\varphi$, resp., $K,T,\smodels\varphi$ for given formula $\varphi\in\CTL$. By \Cref{lem:dwclos} we can just quantify for a singleton sized team, i.e., $|T|=1$. By \Cref{lem:singleton} we immediately obtain the same complexity bounds from usual satisfiability for $\CTL$. Hence \Cref{prop:CTLSAT-complexity} applies and proves the theorem.
\end{proof}

\section{Future Work}
The tautology or validity problem for this new logic is quite interesting and seems to have a higher complexity than the related satisfiability problem. However we have not been able to prove a result yet. Formally the corresponding problems are defined as follows:
\problemdef{$\aVAL$}{A formula $\varphi\in\CTL$.}{Does $K,T\amodels\varphi$ hold for every Kripke structure $K$ and every team $T$ of $K$?}

\problemdef{$\sVAL$}{A formula $\varphi\in\CTL$.}{Does $K,T\smodels\varphi$ hold for every Kripke structure $K$ and every team $T$ of $K$?}

In the context of team-based propositional and modal logics the computational complexity of the validity problem has been studied by \citeauthor{v14} \cite{v14}. \citeauthor{v14} shows that the problem for propositional dependence logic is $\NEXPTIME$-complete whereas for (extended) modal dependence logic it is $\NEXPTIME$-hard and in $\NEXPTIME^{\NP}$.
 
One might also consider to settle \Cref{conj:asynch} which we left open. Intuitively here the weak until operator makes the argument quite difficult to prove due to the possibility of infinite computation paths (informally hence the $\G$ operator).

Dependence logic (construed broadly) is a prospering area in logic in which team semantics has been extensively studied. The logic itself was introduced by \citeauthor{va07} \cite{va07} in 2007 with an aim to express dependencies between variables in systems. Subsequently multitude of related formalisms have been defined and studied. There are several fruitful applications areas for these formalisms, e.g., computational biology, database systems, social choice theory, and cryptography. A modal logic variant of dependence logic was defined by \citeauthor{va08} \cite{va08} in 2008. Modal dependence logic is an extension of modal logic with novel atomic propositions called dependence atoms. 
A dependence atom, denoted by $\dep{p_1,\dots,p_n,q}$, intuitively states that (inside a team) the truth value of the proposition $q$ is functionally determined by the truth values of the propositions $p_1,\dots,p_{n}$.
The introduction of dependence atoms (or some other dependency notions from the field of dependence logic) into our team-based $\CTL$ might lead to a flexible and elegant variant of computation tree logic which can express several interesting dependency properties relevant to practice. Formally dependence atoms are defined in our formalism as follows. If $K=(W,R,\eta)$ is a Kripke structure, $T=\mset{t_{1},\dots,t_{n}}$ is a team, and $\varphi_{1},\dots,\varphi_{n}$ are $\CTL$ formulas, then $K,T\models \textrm{dep}(\varphi_{1},\dots,\varphi_{n})$ holds if and only if
\begin{align*}
 &\forall t_{1},t_{2}\in T: \bigwedge_{i=1}^{n-1}(K,\mset{t_{1}}\models \varphi_{i}\Longleftrightarrow K,\mset{t_{2}}\models\varphi_{i})\\
 &\text{ implies } (K,\mset{t_{1}}\models\varphi_{n}\Longleftrightarrow K,\mset{t_{2}}\models\varphi_{n}).
\end{align*}
The above is the definition of what is known as \emph{modal dependence atoms} of extended modal dependence logic $\mathcal{EMDL}$ introduced by \citeauthor{ehmmvv13} \cite{ehmmvv13}.
 
 It is well-known that there are several alternative inputs to consider in the model checking problem. In general, a model and a formula are given, and then one needs to decide whether the model satisfies the formula. \emph{System complexity} considers the computational complexity for the case of a fixed formula whereas \emph{specification complexity} fixes the underlying Kripke structure. We considered in this paper the \emph{combined complexity} where both a formula and a model belong to the given input. Yet the other two approaches might give more specific insights into the intractability of the synchronous model checking case we investigated. In particular the study of so-to-speak \emph{team complexity}, where the team or the team size is assumed to be fixed, might as well be of independent interest.
 
 Finally this leads to the consideration of different kinds of restrictions on the problems. In particular for the quite strong $\PSPACE$-completeness result for model checking in synchronous semantics it is of interest where this intractability can be pinned to. Hence the investigation of fragments by means of allowed temporal operators and/or Boolean operators will lead to a better understanding of this presumably untamable high complexity.
 
\section{Conclusion}\label{sec:conclusion}
 In this paper we studied computation tree logic in the context of team semantics. We identified two alternative definitions for the semantics: an asynchronous one and a synchronous one. The intuitive difference between these semantics is that, in the latter semantics the flow of time can be seen as synchronous inside a team, whereas in the former semantics the flow of time inside a team can differ. This difference manifests itself in the semantical clauses for the eventuality operator \emph{until} as well as for the temporal operator \emph{future}. From satisfiability perspective the complexity of the new logics behave similar as $\CTL$. One might consider a different kind of satisfiability question: given a formula $\varphi$ and a team size $k$, does there exist a Kripke structure $K$ and a team $T$ of size $k$ in $K$ such that $K,T\smodels\varphi$, resp., $K,T\amodels\varphi$? However the use of the multiset notion easily tames this approach and then lets us conclude with the same result as in \Cref{thm:SAT}.
 
For model checking the complexity of the synchronous case differs to the one of usual $\CTL$. This fact stems from the expressive notion of synchronicity between team members and is in line with the results of \citeauthor{kvw00} \cite{kvw00}. We prove $\PSPACE$-completeness. The lower bound follows by a reduction from $\QBF$ validity and the upper bound via a Ladner-style algorithm \cite{lad77}. 
It might first seem that the complexity of the asynchronous case would also differ with the quite efficient $\CTL$ case (which is $\P$-complete). However the use of closure properties of the relation $\amodels$ enables us to separately check, for each team member, whether it is satisfied in the given structure. Thus a multiple application of the usual $\CTL$ model checking algorithm thereby establishes the same upper bound.
 
\section*{Acknowledgements}
The second author is supported by DFG grant ME 4279/1-1.
The third author supported by Jenny and Antti Wihuri Foundation.
We thank the anonymous referees for their helpful comments.
\bibliographystyle{plainnat}
\bibliography{dctl}

\begin{thebibliography}{31}
\providecommand{\natexlab}[1]{#1}
\providecommand{\url}[1]{\texttt{#1}}
\expandafter\ifx\csname urlstyle\endcsname\relax
  \providecommand{\doi}[1]{doi: #1}\else
  \providecommand{\doi}{doi: \begingroup \urlstyle{rm}\Url}\fi

\bibitem[{\AA}gotnes et~al.(2009){\AA}gotnes, van~der Hoek, Rodriguez-Aguilar,
  Sierra, and Wooldridge]{ahrsw09}
T.~{\AA}gotnes, W.~van~der Hoek, J.~A. Rodriguez-Aguilar, C.~Sierra, and
  M.~Wooldridge.
\newblock Multi-modal {CTL}: Completeness, complexity and an application.
\newblock \emph{Studia Logica}, 92\penalty0 (1):\penalty0 1--26, 2009.

\bibitem[Alur et~al.(2002)Alur, Henzinger, and Kupferman]{aho02}
R.~Alur, T.~A. Henzinger, and O.~Kupferman.
\newblock Alternating-time temporal logic.
\newblock \emph{Journal of the ACM}, 49:\penalty0 672--713, 2002.

\bibitem[Beyersdorff et~al.(2011)Beyersdorff, Meier, Mundhenk, Schneider,
  Thomas, and Vollmer]{bmmstv11}
O.~Beyersdorff, A.~Meier, M.~Mundhenk, T.~Schneider, M.~Thomas, and H.~Vollmer.
\newblock {Model Checking {CTL} is Almost Always Inherently Sequential}.
\newblock \emph{Logical Methods in Computer Science}, 7\penalty0 (2:12), 2011.

\bibitem[Clarke et~al.(1986)Clarke, Emerson, and Sistla]{clemsi86}
E.~Clarke, E.~Allen Emerson, and A.~Sistla.
\newblock Automatic verification of finite-state concurrent systems using
  temporal logic specifications.
\newblock \emph{ACM Transactions on Programming Languages and Systems},
  8\penalty0 (2):\penalty0 244--263, 1986.

\bibitem[Clarke and Emerson(1981)]{clem81}
E.~M. Clarke and E.~A. Emerson.
\newblock Design and synthesis of synchronisation skeletons using branching
  time temporal logic.
\newblock In \emph{Logic of Programs}, volume 131 of \emph{Lecture Notes in
  Computer Science}, pages 52--71. Springer Verlag, 1981.

\bibitem[Ebbing et~al.(2013)Ebbing, Hella, Meier, M{\"u}ller, Virtema, and
  Vollmer]{ehmmvv13}
J.~Ebbing, L.~Hella, A.~Meier, J.-S. M{\"u}ller, J.~Virtema, and H.~Vollmer.
\newblock {Extended Modal Dependence Logic $\mathcal{EMDL}$}.
\newblock In \emph{Proc. 20th WoLLIC}, pages 126--137, 2013.

\bibitem[Emerson and Halpern(1985)]{emha85}
E.~Allen Emerson and J.~Y. Halpern.
\newblock Decision procedures and expressiveness in the temporal logic of
  branching time.
\newblock \emph{Journal of Computer and System Sciences}, 30\penalty0
  (1):\penalty0 1--24, 1985.

\bibitem[Fischer and Ladner(1979)]{fila79}
M.~J. Fischer and R.~E. Ladner.
\newblock Propositional modal logic of programs.
\newblock \emph{Journal of Computer and System Sciences}, 18:\penalty0
  194--211, 1979.

\bibitem[Hella et~al.(2014)Hella, Luosto, Sano, and Virtema]{helusavi14}
L.~Hella, K.~Luosto, K.~Sano, and J.~Virtema.
\newblock The expressive power of modal dependence logic.
\newblock In \emph{Advances in Modal Logic}, 2014.

\bibitem[Hella et~al.(2015)Hella, J{\"a}rvisalo, Kuusisto, Laurinharju,
  Lempi{\"a}inen, Luosto, Suomela, and Virtema]{disc15}
L.~Hella, M.~J{\"a}rvisalo, A.~Kuusisto, J.~Laurinharju, T.~Lempi{\"a}inen,
  K.~Luosto, J.~Suomela, and J.~Virtema.
\newblock Weak models of distributed computing, with connections to modal
  logic.
\newblock \emph{DISTRIB COMPUT}, 28\penalty0 (1):\penalty0 31--53, 2015.

\bibitem[Hodges(1997)]{Hodges97c}
W.~Hodges.
\newblock Compositional semantics for a language of imperfect information.
\newblock \emph{Logic Journal of the IGPL}, 1997.

\bibitem[Immerman(1989)]{im89}
N.~Immerman.
\newblock Expressibility and parallel complexity.
\newblock \emph{Siam Journal on Computing}, 18\penalty0 (3):\penalty0 625--638,
  1989.

\bibitem[Immerman(1999)]{immer}
N.~Immerman.
\newblock \emph{Descriptive Complexity}.
\newblock Springer, 1999.

\bibitem[Jamroga and {\AA}gotnes(2007)]{aj07}
W.~Jamroga and T.~{\AA}gotnes.
\newblock Constructive knowledge: What agents can achieve under incomplete
  information.
\newblock \emph{Journal of Applied Non-Classical Logics}, 17\penalty0
  (4):\penalty0 423--475, 2007.

\bibitem[Kontinen et~al.(2014)Kontinen, M{\"u}ller, Schnoor, and
  Vollmer]{kmsv14}
J.~Kontinen, J.-S. M{\"u}ller, H.~Schnoor, and H.~Vollmer.
\newblock Modal independence logic.
\newblock In \emph{Advances in Modal Logic}, 2014.

\bibitem[Kripke(1963)]{kri63}
S.~Kripke.
\newblock Semantical considerations on modal logic.
\newblock In \emph{Acta philosophica Fennica}, volume~16, pages 84--94, 1963.

\bibitem[Kupferman et~al.(2000)Kupferman, Vardi, and Wolper]{kvw00}
O.~Kupferman, M.Y. Vardi, and P.~Wolper.
\newblock An automata-theoretic approach to branching-time model checking.
\newblock \emph{Journal of the ACM}, 47\penalty0 (2):\penalty0 312--360, 2000.

\bibitem[Ladner(1977)]{lad77}
R.~Ladner.
\newblock The computational complexity of provability in systems of modal
  propositional logic.
\newblock \emph{SIAM Journal on Computing}, 6\penalty0 (3):\penalty0 467--480,
  1977.

\bibitem[Meier et~al.(2009)Meier, Mundhenk, Thomas, and Vollmer]{mmtv09}
A.~Meier, M.~Mundhenk, M.~Thomas, and H.~Vollmer.
\newblock {The Complexity of Satisfiability for Fragments of CTL and CTL$^*$}.
\newblock \emph{International Journal of Foundations of Computer Science},
  20\penalty0 (05):\penalty0 901--918, 2009.

\bibitem[Meier et~al.(2012)Meier, M{\"u}ller, Mundhenk, and Vollmer]{mmmv12}
A.~Meier, J.-S. M{\"u}ller, M.~Mundhenk, and H.~Vollmer.
\newblock {Complexity of Model Checking for Logics over Kripke models}.
\newblock \emph{Bulletin of the EATCS}, 108:\penalty0 49--89, 2012.

\bibitem[Pippenger(1997)]{pip97}
N.~Pippenger.
\newblock \emph{Theories of Computability}.
\newblock Cambridge University Press, Cambridge, 1997.

\bibitem[Pnueli(1977)]{pn77}
A.~Pnueli.
\newblock The temporal logic of programs.
\newblock In \emph{Proc.\ FOCS}, pages 46--57. IEEE Comp. Soc. Press, 1977.

\bibitem[Pratt(1980)]{pr80}
V.~R. Pratt.
\newblock A near-optimal method for reasoning about action.
\newblock \emph{Journal of Computer and System Sciences}, 20\penalty0
  (2):\penalty0 231--254, 1980.

\bibitem[Prior(1957)]{pr57}
A.~N. Prior.
\newblock \emph{Time and Modality}.
\newblock Clarendon Press, Oxford, 1957.

\bibitem[Schnoebelen(2002)]{sc02}
P.~Schnoebelen.
\newblock The complexity of temporal logic model checking.
\newblock In \emph{Advances in Modal Logic}, volume~4, pages 393--436, 2002.

\bibitem[Stockmeyer(1977)]{st77}
L.~J. Stockmeyer.
\newblock The polynomial-time hierarchy.
\newblock \emph{Theoretical {C}omputer {S}cience}, 3:\penalty0 1--22, 1977.

\bibitem[V\"a\"an\"anen(2007{\natexlab{a}})]{va07}
J.~V\"a\"an\"anen.
\newblock \emph{Dependence Logic}.
\newblock Cambridge University Press, 2007{\natexlab{a}}.

\bibitem[V\"a\"an\"anen(2007{\natexlab{b}})]{va14}
J.~V\"a\"an\"anen.
\newblock The logic of approximate dependence.
\newblock \emph{CoRR}, arXiv:1408.4437, 2007{\natexlab{b}}.

\bibitem[V\"a\"an\"anen(2008)]{va08}
J.~V\"a\"an\"anen.
\newblock {M}odal {D}ependence {L}ogic.
\newblock In \emph{New Perspectives on Games and Interaction}. Amsterdam
  University Press, Amsterdam, 2008.

\bibitem[Virtema(2014)]{v14}
J.~Virtema.
\newblock Complexity of validity for propositional dependence logics.
\newblock In \emph{Proc. 5th GandALF}, 2014.

\bibitem[Vollmer(1999)]{vol99}
H.~Vollmer.
\newblock \emph{Introduction to Circuit Complexity -- A Uniform Approach}.
\newblock Texts in Theoretical Computer Science. Springer Verlag, Berlin
  Heidelberg, 1999.

\end{thebibliography}
\end{document}